\DeclareMathAlphabet{\mathpzc}{OT1}{pzc}{m}{it}
\DeclareMathAlphabet{\mathitsf}{OML}{cmbr}{m}{it}
\newtheorem{lemma}{Lemma}
\newtheorem{proposition}{Proposition}
\DeclareMathOperator{\E}{\mathds{E}}
\newcommand{\argmax}{\mathop{\mathrm{arg\,max}}}
\newcommand{\B}[1]{\mathbf{#1}}
\newcommand{\EX}[1]{\E\left\{{#1}\right\}}
\newcommand{\EXs}[2]{\E_{{#1}}\left\{{#2}\right\}}
\newcommand{\diag}[1]{\mathsf{diag}\left(#1\right)}
\providecommand{\keywords}[1]{\textbf{\textit{Index terms---}}#1}
\newcommand{\CN}[2]{\mathscr{CN}\left({#1},{#2}\right)}
\newcommand{\CNn}[2]{\mathscr{CN}_{#1}\left({#2}\right)}
\newcommand{\eav}{\textrm{E}}
\newcommand{\snr}[1]{\textrm{snr}_{\textrm{#1}}}
\newcommand{\snrE}[1]{\textrm{snr}_{\textrm{E}#1}}
\newcommand{\ds}{\displaystyle}
\begin{document}
	
	\title{
		Cell-free Massive MIMO Networks: Optimal Power Control  against Active Eavesdropping
	}
	\author{
		\IEEEauthorblockN{Tiep M. Hoang,
			%\IEEEmembership{Student Member,~IEEE},
			%Long D. Nguyen,
			%\IEEEmembership{Student Member,~IEEE},
			Hien Quoc Ngo,
			%\IEEEmembership{Member,~IEEE},
			Trung Q. Duong, Hoang D. Tuan, and Alan Marshall
            			%\IEEEmembership{Senior Member,~IEEE}
		}
	\thanks{
		This work was supported in part by the U.K. Royal Academy of Engineering Research Fellowship under Grant RF1415$\backslash$14$\backslash$22, by a U.K. Engineering and Physical Sciences Research Council under Grant EP/P019374/1, by a Research Environment Links grant, ID 339568416, under the Newton Programme Vietnam partnership, and Newton Prize 2017.
	}
	\thanks{
		T. M. Hoang, H. Q. Ngo, and T. Q. Duong are with the Queen's University of Belfast, Belfast BT7 1NN, the United Kingdom (e-mail: \{mhoang02, hien.ngo, trung.q.duong\}@qub.ac.uk).
	}
	\thanks{
		H. D. Tuan is with the University of Technology Sydney, Ultimo, NSW, Australia (e-mail: tuan.hoang@uts.edu.au).
	}
	\thanks{
		A Marshall is with the University of Liverpool, Liverpool L69 3GJ, the United Kingdom (e-mail: alan.marshall@liverpool.ac.uk).
	}
}
\markboth
	{
	}{
	}
	
	\maketitle
	%\vspace{-1.8cm}
	
	\begin{abstract}
		This paper studies the security aspect of  a recently introduced network (``cell-free massive MIMO'')
under a pilot spoofing attack. Firstly, a simple method to
recognize the presence of this type of an active eavesdropping attack to a particular user
is shown. In order to deal with this attack, we consider the problem of  maximizing the achievable data rate of
the attacked user  or  its achievable secrecy rate.
The corresponding problems
of  minimizing the consumption power subject to security constraints are also considered in parallel.
Path-following algorithms are developed to solve the posed optimization problems under different
 power allocation to access points (APs). Under equip-power allocation to APs, these optimization problems admit
 closed-form solutions. Numerical results show their efficiencies.
	\end{abstract}

\keywords{Cell-free, channel estimation, pilot spoofing attack, active eavesdropping, inner convex approximation.}

%------------------------------------------------
\section{Introduction}\label{SEC: INTRO}
%------------------------------------------------
\subsection{Previous Works}

\subsubsection{Cell-free massive MIMO networks}
\emph{Cell-free massive MIMO} has been recently introduced in \cite{Ngo-2015-CellFree-SPAWC, Cell-free_AHien2017, AHien-2017-TGCN}. These papers showed that by proper implementation, cell-free massive MIMO can provide a uniformly good service to all users in the network and outperform small-cell massive MIMO in terms of throughput, and  handle the shadow fading correlation more efficiently. In a typical small-cell massive MIMO system, the channel from an access point (AP) to a user is a single scalar. In contrast, in a cell-free Massive MIMO system, all APs can liaise with each other via a central processing unit (CPU) to perform beamforming transmission tasks, and thus the effective channel (from an AP to a user) will take the form of an inner product between two vectors \cite{Cell-free_AHien2017}. That inner product can converge to its mean when the length of each vector (equivalently, the number of APs) is large enough. As a result, the effective channel also converges to a constant and there is no need to estimate downlink channels in the massive MIMO systems using cell-free architecture, while the small-cell counterpart may require both downlink and uplink training for channel estimation.

Inspired by \cite{Ngo-2015-CellFree-SPAWC, Cell-free_AHien2017, AHien-2017-TGCN},  cell-free massive MIMO  has been further studied in \cite{Buzzi-2017-CellFree-WCL, Liu-2017-CellFree-TSP, Long-2017-CellFree-COML,Toan-2017-CellFree-COML}.
Cell-free massive MIMO  was modified in \cite{Buzzi-2017-CellFree-WCL} to allow each AP serving only several users based on the strongest channels instead of serving all users.
The joint user association and interference/power control to mitigate the interference and cell-edge effect was considered
in \cite{Liu-2017-CellFree-TSP}.  The problem of designing zero-forcing precoders  to maximize the energy efficiency
 for cell-free massive MIMO networks was considered in \cite{Long-2017-CellFree-COML}. We are motivated to investigate the security aspect of cell-free massive MIMO as it was not considered in these papers.

\subsubsection{Pilot spoofing attack}
Recently, active eavesdropping has attracted the researchers' attention to physical layer security. It has been proved that active eavesdroppers are more dangerous than passive eavesdroppers because confidential information leaked to the active eavesdroppers is possibly higher \cite{Zhou-2012-Security-TWC}.
Active eavesdropping is an interesting topic which has been emerging in recent years. For instance,
active eavesdroppers are capable of jamming as well as eavesdropping \cite{ Abedi-2017-TWC, Li-2017-TIFS, Mukherjee-2013-Security-TSP} and/or they can send spoofing pilot sequences \cite{Amariucai-2012-TIFS, Zhou-2012-Security-TWC, Kapetanovic-2015-Security-Mag}. The latter scenario relates to the so-called \emph{pilot spoofing attacks} \cite{Amariucai-2012-TIFS,Zhou-2012-Security-TWC}. Eavesdropping attacks caused by an active eavesdropper is more harmful than passive ones. A feedback-based encoding scheme to  improve the  secrecy of transmission was
 proposed in  \cite{Amariucai-2012-TIFS}. On the contrary, from an eavesdropping point of view, \cite{Zhou-2012-Security-TWC} showed how an active eavesdropper achieves a satisfactory performance with the use of transmission energy.

Initialized  by \cite{Zhou-2012-Security-TWC},  pilot spoofing attacks in wireless security
have been actively studied  \cite{Kapetanovic-2015-Security-Mag, Wu-2016-Security-TIT, Im-2015-Security-TWC, Tugnait-2016-WCL, Xiong-2016-TIFS, Xiong-2015-TIFS}. By assuming that an eavesdropper can attack a wireless communication system during training phase to gain the amount of leaked information, the authors in \cite{Kapetanovic-2015-Security-Mag, Wu-2016-Security-TIT, Im-2015-Security-TWC, Tugnait-2016-WCL, Xiong-2016-TIFS, Xiong-2015-TIFS} have studied pilot contamination attacks in distinct scenarios. Their results reveal that active eavesdropping poses an actual threat to different types of wireless systems in general. More specifically, the authors in \cite{Kapetanovic-2015-Security-Mag} conducted a survey of detecting active attacks on massive MIMO systems. The authors in \cite{Wu-2016-Security-TIT} designed an artificial noise to cope with an active eavesdropper in a secure massive MIMO system. The use of artificial noise is not necessary in the present paper
 as our proposed optimization problems can also control beam steering towards intended destinations such that security constraints are met.
Meanwhile, the consideration of the authors in \cite{Im-2015-Security-TWC} is a secret key generation, which is beyond the scope of our paper. In \cite{Tugnait-2016-WCL} a method called \emph{minimum description length source enumeration} is employed to detect an active eavesdropping attack in a relaying network; however, the secure performance of the system (via metrics such as secrecy rate or secrecy outage probability) is not evaluated. Other detection techniques can be found in \cite{Xiong-2016-TIFS} and \cite{Xiong-2015-TIFS}. While \cite{Xiong-2016-TIFS} resort to the downlink phase to estimate channels and improve the system performance, we only use one training phase to detect a potential eavesdropper (which is presented in Appendix A). Our simple detection technique is similar to that in \cite{Xiong-2015-TIFS}, which also compares the asymmetry of received signal power levels to detect eavesdroppers. The differences between \cite{Xiong-2015-TIFS} and our paper lie in modelling (massive MIMO networks versus cell-free networks) and optimization formulations. Although the eavesdropping attack detection methods in \cite{Tugnait-2016-WCL, Xiong-2016-TIFS, Xiong-2015-TIFS} are really attractive, we will not delve into similar methods and not consider such a method as a major contribution.
%Also, we will ignore the metrics of evaluating detection performance.
Instead, we focus on solving optimization problems to provide specific solutions for cell-free systems in the case that a user is really suspected of being an eavesdropper.

%It should be noted that both \cite{Wu-2016-Security-TIT} and \cite{Xiong-2015-TIFS} use the law of large number to analyse massive MIMO systems. This approach, however, might not be applied for the systems with the limit number of nodes/antennas.

\subsection{Contributions}

%- Channel Estimation Error:\cite{Zheng-2016-Security-TVT, Li-2016-Security-TVT, Qahtani-2017-Security-TIFS}

%Related works: \cite{Estimation_OPT-Biguesh2006, Estimation_OPT-Yu2007}

As discussed above, the introduction of a cell-free massive MIMO network can bring about a huge chance of improving throughput in comparison with small-cell networks. We thus study the security aspect of such a network and more importantly, this paper is the first work on the integration of security with the cell-free massive MIMO architecture. On the other hand, the analytical approach in this work is different from previous papers on security for massive MIMO. The major difference is that we do not use the law of large number to formulate approximate expressions for signal-to-noise (SNR) ratios. Instead, we consider lower- and upper- bounds for SNR expressions, thereby a lower-bound for secrecy rate is formulated and evaluated. This alternative approach, of course, holds true for general situations in which the number of nodes/antennas are not so many (and hence the term ``massive'' can be relatively understood and/or can be also removed).

In this paper, we examine a cell-free network in which an eavesdropper is actively involved in attacking the system during the training phase. We simply and shortly show that such an attack is dangerous but can be detected by a simple detection mechanism. Thereby, efforts to deal with active eavesdropping can be made and secure strategies can be prepared at APs during the next phase (i.e. the downlink phase). With these in mind and with the aim of keeping confidential information safe, we can realize beforehand which user is under attack and thus, we can propose optimization problems based on secrecy criteria to protect that user from being overhead. Our proposed optimization problems can be classified into 2 groups. For the first group, we design a matrix of power control coefficients
\begin{itemize}
	\item to maximize the achievable data rate of the user who is under attack (see \ref{subsec: P1})
	\item to maximize the achievable secrecy rate of user 1 (see \ref{subsec: Q1})
	\item to minimize the total power at all APs subject to the constraints on the data rate of each user, including all legitimate users and eavesdropper (see \ref{subsec: R1})
	\item to minimize the total power at all APs subject to the constraints on the achievable data rate and the data rates of other users (i.e. legitimate users not under attack) (see \ref{subsec: S1}).
\end{itemize}
For the second group, we design a common power control coefficient for all APs and consider 4 optimization problems (\ref{subsec: under P1}, \ref{subsec: under Q1}, \ref{subsec: under R1} and \ref{subsec: under S1}), which are similar and comparable to their counterparts in the first group.  While the common goal of all maximization programs is achievable secrecy rate, that of all minimization programs is power consumption at APs. Taking control of power at each AP, we find the most suitable solutions to the proposed optimization problems and compare them in secure performance as well as energy.

The rest of the paper is organized as follows. In Section II, the system model is presented. In Section III, we propose two maximization problems to maximize achievable secrecy rate subject to several quality-of-service constraints. In parallel, Section IV provides two minimization problems to minimize the power consumption such that security constraints are still guaranteed. In Section V, special cases of the proposed optimization problems are given for comparison purposes. Simulation results and conclusions are given in
Sections VI and VII, respectively.

\emph{Notation:} $[\cdot]^T$, $[\cdot]^*$, and $[\cdot]^{\dagger}$ denote the transpose operator, conjugate operator, and Hermitian operator, respectively. $[\cdot]^{-1}$ and $[\cdot]^+$ denote the inverse operator and pseudo-inverse operator, respectively. Vectors and matrices are represented with lowercase boldface and uppercase boldface, respectively. $\B{I}_n$ is the $n\times n$ identity matrix. $\|\cdot\|$ denotes the Euclidean norm. $\EX{\cdot}$ denotes expectation. $\B{z}\sim\CNn{n}{\B{\bar{z}},\B{\Sigma}}$ denotes a complex Gaussian vector $\B{z}\in\mathbb{C}^{n\times 1}$ with mean vector $\B{\bar{z}}$ and covariance matrix $\B{\Sigma}\in\mathbb{C}^{n\times n}$.
%$\Exp{r}$ denotes the exponential distribution with rate $r$. $\Erl{k}{r}$ denotes the Erlang distribution with shape $k$ and rate $r$. The functions $E_n(z)$ and ${}_2F_1(a,b;c;z)$ denote the exponential integral function of order $n$ \cite[Eq. (5.1.4)]{AS:70:Book} and the hypergeometric function \cite[Eq. (9.14.2)]{GR:07:Book}, respectively.

%Let $\B{A}_{(:,k)}$ and $\B{A}_{(k,:)}$ denote the $k$-th column vector and the $k$-th row vector of matrix $\B{A}$.

\begin{figure*}[t!]
   \centerline{\includegraphics[width=1\textwidth]{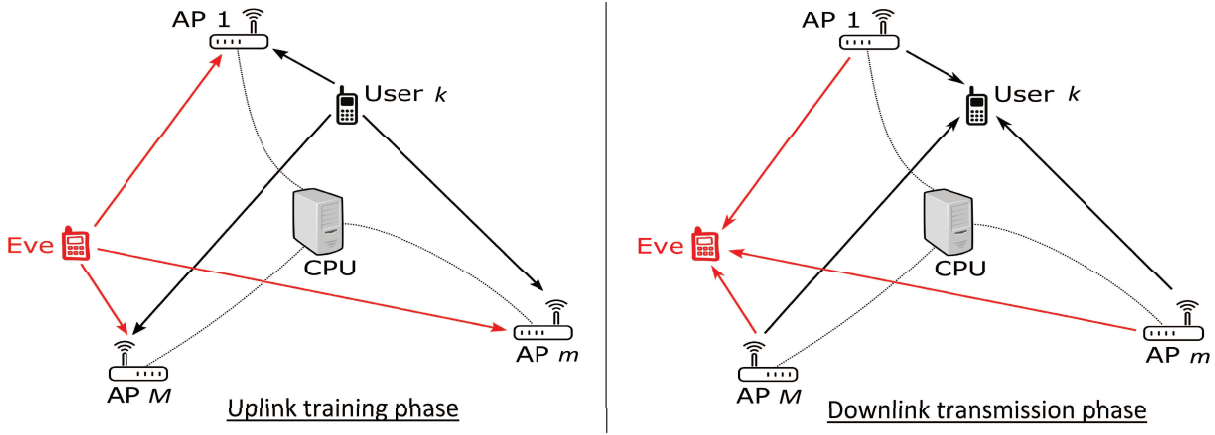}}
   \caption{
   	A system model consisting of $M$ APs, $K$ legal users and one active eavesdropper Eve. The arrows point to the direction from transmitters to receivers. All directions, connected to Eve, are in red. In uplink training phase, all users and Eve send the pilots to the APs in order to request for the messages, which privately intended for them. Connected together through a CPU, the APs exchange information, estimate channels and detect abnormality in pilot sequences. In downlink transmission phase, the APs transmit their designed signals to users and Eve.
   }
   \label{fig:number}
\end{figure*}

%------------------------------------------------
\section{Cell-Free System Model}\label{SEC: SYS}
%------------------------------------------------
We consider a system with $M$ APs and $K$ users in the presence of an active eavesdropper (Eve). Each node is equipped with a single antenna and all nodes are randomly positioned. Let $g_{mk} = \sqrt{\beta_{mk}} h_{mk} \sim\CN{0}{\beta_{mk}}$ be the downlink channel from the $m$th AP to the $k$th user.\footnote{In the formulation $g_{mk} = \sqrt{\beta_{mk}} h_{mk} $, the term $\beta_{mk}$ represents the large scale fading, while the term $h_{mk} \sim\CN{0}{1}$ implies the small scale fading. The value of $\beta_{mk}$ is constant and is based on a particular rule of power degradation. This rule will be presented in Section VI, given that the Hata-COST231 propagation prediction model is used (see \cite{Rappaport-Book-2002} and \cite{Chen-2006-VTC-HataModel}).} We assume channel reciprocity between uplink and downlink.
%then the uplink channel $g_{km}$ from the $k$th user to the $m$th AP turns out to be $g_{km}=g_{mk}^T\equiv g_{mk}$.
Similarly, let $g_{m\eav} \sim\CN{0}{\beta_{m\eav}}$ be the channel between the $m$th AP and Eve. Note that the desirable property of channel reciprocity requires the highly accurate calibration of hardware. In addition, the APs in cell-free massive MIMO systems are connected to a CPU via backhaul, thereby they can share information. We assume that the backhaul is perfect enough to consider error-free information only. Any limitation on capacity (caused by imperfect backhaul) will be left for future work.

The transmission includes 2 phases: Uplink training for channel estimation and downlink data transmission.

\subsection{Uplink training}
In this phase, the $k$th user sends a certain pilot vector $\B{p}_k\in\mathbb{C}^{T\times 1}$ to all APs where $T$ is an integer number. If $L_{int}$ denotes the coherence interval, then the first $T$ symbols are for pilot training and the $(L_{int} - T)$ remaining symbols are for data transmission. In low-mobility environment, the coherence interval can take on large numbers. It is shown that if the vehicle speed is $5.4$ km/h, the coherence interval $L_{int}$ can approach $15000$ symbols (see \cite[p.23] {Fundamentals_MassiveMIMO}). With such a large value of $L_{int}$, we can totally assign a sufficiently-large number to $T$ such that the inequality $T\geq K$ holds true. For example, $(T,K)=(150,100)$ is totally possible in practical situations (note that $T=150$ accounts for only $0.1\%$ of $L_{int}=15000$). In short, we can totally have $T\geq K$ and then design $K$ orthogonal pilot vectors such that $\B{p}_k^{\dagger} \B{p}_{k'}=0$ for $k\neq k'$ and $\|\B{p}_k\|^2=1$.
In general, $\B{p}_1, \ldots, \B{p}_K$ are known to Eve because the pilot sequences of a system are standardized and public. Taking advantage of this, Eve also sends its pilot sequence $\B{p}_{\eav}$ to all APs. If Eve wants to detect the signal destined for the $l$th user, $\B{p}_{\eav}$ will be designed to be the same as $\B{p}_l$ (see \cite{Zhou-2012-Security-TWC, Secure-Massive-PILOT_Kapetanovic2013, Secure-Massive-Pilot_Wu2015}).
Without the loss of generality, let us consider the situation in which Eve aims to overhear the confidential messages intended for the 1st user, i.e. $\B{p}_{\eav}=\B{p}_1$. At the $m$th AP, the received pilot vector is given by
\begin{align}\label{phase 1: signal 0}
  \B{y}_{p,m} = \sqrt{T\rho_u} \sum_{k=1}^{K} g_{mk} \B{p}_{k} + \sqrt{T\rho_{\eav}} g_{m\eav} \B{p}_{1} + \B{w}_m
\end{align}
where $\rho_u \triangleq P_u/N_0$ and $\rho_{\eav} \triangleq P_{\eav}/N_0$. Herein, $P_u$ and $P_{\eav}$ are the average transmit power of each user and that of Eve, respectively; while $N_0$ is the average noise power per a receive antenna. $\B{w}_m$ is an additive white Gaussian noise (AWGN) vector with $\B{w}_m\sim\CN{\B{0}}{\B{I}}$. Projecting $\B{y}_{p,m}$ onto $\B{p}_k^{\dagger}$, we can write the post-processing signal $y_{km}= \B{p}_k^{\dagger} \B{y}_{p,m}$ as\footnote{
If we assumed $T<K$ (i.e. $\B{p}_k^{\dagger} \B{p}_{k'}\neq0$ for $k\neq k'$), there would be the presence of the term $\sqrt{T\rho_u} \sum_{k'\neq k}^{K} g_{mk'} \B{p}_{k}^{\dagger} \B{p}_{k'}$ in \eqref{phase 1: signal 1}. Other changes could also be made and the framework of this paper could be re-applied.
}
\begin{align}\label{phase 1: signal 1}
  y_{km} =
    \left\{
      \begin{array}{ll}
        \sqrt{T\rho_u} g_{mk} + \B{p}_k^{\dagger} \B{w}_m, & k\neq 1 \\
        \sqrt{T\rho_u} g_{m1} + \sqrt{T\rho_{\eav}} g_{m\eav} + \B{p}_1^{\dagger}\B{w}_m, & k= 1
      \end{array}.
    \right.
\end{align}

It is of crucial importance that all APs are not aware of an eavesdropping attack until they have realized an abnormal sign from the sequence of signals $\{ y_{km} \}$ in \eqref{phase 1: signal 1}. Based on that abnormal sign, APs can identify the pilot which might be harmed. Therefore, it is necessary for APs to have a method to observe abnormality from $\{ y_{km} \}$. We describe such a method in Appendix \ref{Appendix: method}.

Besides, with the aim of estimating $g_{mk}$ and $g_{m\eav}$ from \eqref{phase 1: signal 1}, the MMSE method is adopted at the $m$th AP, i.e.
\begin{align}\label{phase 1: MMSE estimate 1}
\hat{g}_{mk} &=
\left\{
\begin{array}{ll}
\dfrac{\sqrt{T\rho_u}\beta_{mk}}{T\rho_u\beta_{mk} +1} y_{km}, & k\neq 1 \\
\dfrac{\sqrt{T\rho_u}\beta_{m1}}{T\rho_u\beta_{m1} + T\rho_{\eav}\beta_{m\eav} + 1} y_{1m}, & k= 1
\end{array}
\right.
\end{align}
and
\begin{align}\label{phase 1: MMSE estimate 2}
\hat{g}_{m\eav} &= \sqrt{\frac{ \rho_{\eav} }{\rho_u}} \frac{\beta_{m\eav}}{\beta_{m1}} \hat{g}_{m1}.
\end{align}
Let us denote
\begin{align}
\gamma_{mk}\triangleq \EX{|\hat{g}_{mk}|^2}
=\left\{
\begin{array}{ll}
\dfrac{T\rho_u \beta_{mk}^2}{T\rho_u \beta_{mk} + 1}, & k\neq 1 \\
\dfrac{T\rho_u \beta_{m1}^2}{T\rho_u \beta_{m1} + T\rho_{\eav} \beta_{m\eav} + 1}, & k= 1
\end{array}
\right.
\nonumber
\end{align}
and
$
\gamma_{m\eav}\triangleq \EX{|\hat{g}_{m\eav}|^2}.
$ Using \eqref{phase 1: MMSE estimate 2}, we can also rewrite
$$
\gamma_{m\eav} = \alpha_m \gamma_{m1}
$$ with
$
\alpha_m = \left( \rho_{\eav}\beta_{m\eav}^2 \right)/\left(\rho_u\beta_{m1}^2\right).
$
In association with the above, we state the following proposition for later use in the rest of paper.
\begin{proposition}
$\hat{g}_{mk}$ and $\hat{g}_{mk'}$ are uncorrelated for $\forall k'\neq k$. At the same time, $\hat{g}_{m\eav}$ and $\hat{g}_{mk'}$ are uncorrelated for $\forall k'\neq 1$. Furthermore, we have
\begin{align}\label{eq: pro 1}
    \EX{\left| \hat{g}_{mk} \hat{g}_{mk'}^* \right|^2}
        =
        \left\{
            \begin{array}{ll}
                \gamma_{mk} \gamma_{mk'}, & k'\neq k \\
                2 \gamma_{mk}^2, & k'=k
        \end{array},
        \right.
\end{align}
and
\begin{align}\label{eq: pro 2}
   \EX{\left| \hat{g}_{m\eav} \hat{g}_{mk'}^* \right|^2}
%        &= \frac{ \rho_{\eav}\beta_{m\eav} }{\rho_u\beta_{m1}} \EX{ \left| \hat{g}_{m1} \hat{g}_{mk'}^* \right|^2}
%        \nonumber \\
&= \left\{
\begin{array}{ll}
\alpha_m \gamma_{m1} \gamma_{mk'}, & k' \neq 1 \\
2 \alpha_m \gamma_{m1}^2, & k' = 1
\end{array}.
\right.
\end{align}
\end{proposition}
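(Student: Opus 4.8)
The plan is to exploit the fact that every quantity $\hat{g}_{mk}$ and $\hat{g}_{m\eav}$ is a zero-mean \emph{proper} (circularly symmetric) complex Gaussian random variable. Indeed, from \eqref{phase 1: MMSE estimate 1} each $\hat{g}_{mk}$ is a deterministic scalar multiple of $y_{km}$, and from \eqref{phase 1: signal 1} each $y_{km}$ is a linear combination of the channels $g_{m1},\ldots,g_{mK},g_{m\eav}$ and the noise $\B{w}_m$. Since the channels are mutually independent with $g_{mk}\sim\CN{0}{\beta_{mk}}$, and $\B{w}_m\sim\CN{\B{0}}{\B{I}}$ is independent of them, the collection $\{\hat{g}_{mk}\}_k$ together with $\hat{g}_{m\eav}$ is \emph{jointly} zero-mean proper complex Gaussian. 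This observation is the engine of the whole proof: for jointly proper Gaussian variables, being uncorrelated is equivalent to being independent, and a single scalar proper Gaussian $z$ obeys the fourth-moment identity $\EX{|z|^4}=2\left(\EX{|z|^2}\right)^2$.

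First I would establish the uncorrelatedness. For $k\neq k'$ with both indices $\neq 1$, I write $\EX{\hat{g}_{mk}\hat{g}_{mk'}^*}$ as a scalar multiple of $\EX{y_{km}y_{mk'}^*}$ and expand using \eqref{phase 1: signal 1}. The channel cross-term vanishes because $g_{mk}$ and $g_{mk'}$ are independent and zero-mean; the noise cross-term equals $\B{p}_k^{\dagger}\EX{\B{w}_m\B{w}_m^{\dagger}}\B{p}_{k'}=\B{p}_k^{\dagger}\B{p}_{k'}=0$ by pilot orthonormality; and the mixed channel-noise terms vanish by independence. The case where one index equals $1$ is identical: the extra Eve term $\sqrt{T\rho_{\eav}}\,g_{m\eav}$ in $y_{1m}$ is independent of every $g_{mk'}$ with $k'\neq 1$, and the noise again decouples through $\B{p}_1^{\dagger}\B{p}_{k'}=0$. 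The uncorrelatedness of $\hat{g}_{m\eav}$ with $\hat{g}_{mk'}$ for $k'\neq 1$ then follows at once from \eqref{phase 1: MMSE estimate 2}, since $\hat{g}_{m\eav}$ is a real scalar times $\hat{g}_{m1}$.

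Next I would compute the fourth moments in \eqref{eq: pro 1}. When $k'\neq k$, joint properness upgrades uncorrelatedness to independence, so $\EX{|\hat{g}_{mk}\hat{g}_{mk'}^*|^2}=\EX{|\hat{g}_{mk}|^2}\EX{|\hat{g}_{mk'}|^2}=\gamma_{mk}\gamma_{mk'}$ directly from the definition of $\gamma_{mk}$. When $k'=k$, the scalar fourth-moment identity gives $\EX{|\hat{g}_{mk}|^4}=2\gamma_{mk}^2$. For \eqref{eq: pro 2} I would substitute $\hat{g}_{m\eav}=\sqrt{\rho_{\eav}/\rho_u}\,(\beta_{m\eav}/\beta_{m1})\,\hat{g}_{m1}$ and pull the squared coefficient $\alpha_m=\rho_{\eav}\beta_{m\eav}^2/(\rho_u\beta_{m1}^2)$ out of the expectation; the $k'\neq 1$ case then reduces to the independent case $\EX{|\hat{g}_{m1}\hat{g}_{mk'}^*|^2}=\gamma_{m1}\gamma_{mk'}$, and the $k'=1$ case to $\EX{|\hat{g}_{m1}|^4}=2\gamma_{m1}^2$, yielding $\alpha_m\gamma_{m1}\gamma_{mk'}$ and $2\alpha_m\gamma_{m1}^2$ respectively.

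The step I expect to require the most care is justifying the passage from \emph{uncorrelated} to \emph{independent} and the use of the factor-$2$ (rather than factor-$3$) fourth-moment identity. Both hinge on circular symmetry: one must verify that the estimates are \emph{proper} complex Gaussians, i.e.\ that pseudo-covariances such as $\EX{\hat{g}_{mk}^2}$ vanish. This holds because the underlying $g_{mk}$ and $\B{w}_m$ are themselves proper and properness is preserved under linear maps; once this is in place, the remainder is routine bilinear bookkeeping that leans entirely on the independence of distinct channels and the orthonormality $\B{p}_k^{\dagger}\B{p}_{k'}=\delta_{kk'}$.
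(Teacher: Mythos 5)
Your proposal is correct and follows essentially the same route as the paper's (very terse) proof: establish $\EX{\hat{g}_{mk}\hat{g}_{mk'}^*}=0$ via channel independence and pilot orthogonality, then obtain \eqref{eq: pro 1} and \eqref{eq: pro 2} from the estimate expressions and the definitions of $\gamma_{mk}$, $\gamma_{m\eav}$. You merely make explicit the ingredients the paper leaves implicit — joint properness of the estimates, the resulting equivalence of uncorrelatedness and independence, and the circularly-symmetric fourth-moment identity $\EX{|z|^4}=2\left(\EX{|z|^2}\right)^2$ — which is a sound filling-in of the same argument.
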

\begin{proof}
It is straightforward to prove the uncorrelated-ness by showing $\EX{\hat{g}_{mk} \hat{g}_{mk'}^*} = 0$ for $\forall k'\neq k$ and $\EX{\hat{g}_{m\eav} \hat{g}_{mk'}^*} =0$ for $\forall k'\neq 1$. Using these results, we can obtain \eqref{eq: pro 1} and \eqref{eq: pro 2} with the help of \eqref{phase 1: signal 1}--\eqref{phase 1: MMSE estimate 2} and the definitions of $\gamma_{mk}$ and $\gamma_{m\eav}$.
\end{proof}

Note that the eavesdropper's attack against the $1$st user during the training phase leads to the presence of $\rho_{\eav}$ in the denominator of $\hat{g}_{m1}$ (which is called a pilot spoofing attack).

\subsection{Downlink transmisson}
In this phase, the $m$th AP uses the estimate $\hat{g}_{mk}$ to perform beamforming technique. First, we denote $s_k$ be the signal intended for the $k$th user and $P_s$ be the average transmit power for a certain $s_k$. Then the signal transmitted by the $m$th AP can be designed (according to beamforming technique) as \cite{Cell-free_AHien2017}
\begin{align}\label{phase 2: beamforming 1}
  x_m &= \sqrt{P_s} \sum_{k=1}^{K} \sqrt{\eta_{mk}} \hat{g}_{mk}^* s_{k}
\end{align}
with $s_k$ being normalized such that $\EX{|s_k|^2}=1$. In \eqref{phase 2: beamforming 1}, $\eta_{mk}$ is the power control coefficient, which corresponds to the downlink channel from the $m$th AP to the $k$th user.

As such, the received signal at the $k$th user and Eve are, respectively, given by
\begin{align}
  z_k &= \sqrt{\rho_s}\sum_{m=1}^{M} g_{mk}
        \left(\sum_{k=1}^{K} \sqrt{\eta_{mk}} \hat{g}_{mk}^* s_{k}\right)
        + n_k ,
    \label{phase 2: signal 1}
    \\
  z_{\eav} &= \sqrt{\rho_s}\sum_{m=1}^{M} g_{m\eav}
        \left(\sum_{k=1}^{K} \sqrt{\eta_{mk}} \hat{g}_{mk}^* s_{k}\right)
        + n_{\eav}
    \label{Eve: signal 1}
\end{align}
where $\rho_s = P_s/N_0$, $n_k\sim\CN{0}{1}$, and $n_{\eav} \sim\CN{0}{1}$.

\subsubsection{The lower-bound for the mutual information between $s_k$ and $z_k$}
We rewrite \eqref{phase 2: signal 1} as
\begin{align}
  z_k &= \mathrm{DS}_k \times s_{k} +
    \underbrace{
    \mathrm{BU}_k \times s_{k} + \sum_{k'\neq k}^{K} \mathrm{UI}_{kk'} \times s_{k'} + n_k
    }_{\textrm{treated~as~aggregated~noise}} ,
  \label{phase 2: signal 2}
\end{align}
where
\begin{align}
	\mathrm{DS}_k &\triangleq \sqrt{\rho_s}\sum_{m=1}^{M} \EX{ \sqrt{\eta_{mk}} g_{mk} \hat{g}_{mk}^* } ,
	\nonumber \\
	\mathrm{BU}_k &\triangleq \sqrt{\rho_s} \sum_{m=1}^{M}
		\left(\sqrt{\eta_{mk}} g_{mk} \hat{g}_{mk}^* - \EX{ \sqrt{\eta_{mk}} g_{mk} \hat{g}_{mk}^* } \right) ,
	\nonumber \\
	\mathrm{UI}_{kk'} &\triangleq \sqrt{\rho_s} \sum_{m=1}^{M} \sqrt{\eta_{mk'}} g_{mk} \hat{g}_{mk'}^*
	\nonumber
\end{align}
represent the strength of the desired signal $s_k$, the beamforming gain uncertainty, and the interference caused by the $k'$th user (with $k'\neq k$), respectively. It is proved that the terms $\mathrm{DS}_k$, $\mathrm{BU}_k$, $\mathrm{UI}_{kk'}$ and $n_k$ in \eqref{phase 2: signal 2} are pair-wisely uncorrelated.

\begin{lemma}\label{lemma}
	Let $U$ and $V$ be complex-valued random variables with $U\sim\CN{0}{var\{U\}}$ and $\EX{|V|^2}=var\{V\}$. Given that $U$ and $V$ are uncorrelated, then the mutual information $I(U; U+V)$ between $U$ and $U + V$ is lower-bounded by $\log_2\left(1 + var\{U\} / var\{V\} \right)$. Consequently, the lower-bound SNR can be given by $var\{U\} / var\{V\} $.
\end{lemma}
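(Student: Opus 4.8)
The plan is to treat $Y \triangleq U + V$ as the output of an additive-noise channel with input $U$, and to lower-bound the mutual information by exploiting two facts: that $U$ is \emph{exactly} Gaussian, so its differential entropy is known in closed form, and that the Gaussian law maximizes differential entropy among all complex random variables with a prescribed second moment. Concretely, I would start from the identity $I(U;Y) = h(U) - h(U\mid Y)$, where $h(\cdot)$ denotes differential entropy in bits. Since $U\sim\CN{0}{var\{U\}}$ is circularly symmetric complex Gaussian, the first term is exactly $h(U) = \log_2\!\left(\pi e\, var\{U\}\right)$, so the whole problem reduces to \emph{upper}-bounding the conditional entropy $h(U\mid Y)$.

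To bound $h(U\mid Y)$, I would introduce the best linear estimator of $U$ from $Y$, namely $\hat U = \alpha Y$ with $\alpha = var\{U\}/var\{Y\}$, and use the chain
\[
h(U\mid Y) = h(U - \alpha Y \mid Y) \le h(U - \alpha Y) \le \log_2\!\left(\pi e\, var\{U-\alpha Y\}\right).
\]
The first equality is the translation invariance of conditional differential entropy, since $\alpha Y$ is a function of the conditioning variable $Y$; the second inequality is ``conditioning reduces entropy''; and the last inequality is the maximum-entropy property of the complex Gaussian for a fixed value of $\EX{|U-\alpha Y|^2}$. It then remains to evaluate $var\{U-\alpha Y\}$, and this is exactly where the uncorrelatedness hypothesis enters. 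Because $U$ has zero mean and $U,V$ are uncorrelated, one obtains $\EX{U Y^*} = \EX{|U|^2} + \EX{U V^*} = var\{U\}$ and $var\{Y\} = var\{U\} + var\{V\}$, so the residual variance collapses to the familiar MMSE expression
\[
var\{U-\alpha Y\} = var\{U\} - \frac{\big(var\{U\}\big)^2}{var\{U\}+var\{V\}} = \frac{var\{U\}\, var\{V\}}{var\{U\}+var\{V\}}.
\]

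Assembling the pieces, I would subtract the bound on $h(U\mid Y)$ from the exact value of $h(U)$; the $\pi e$ factors cancel and the ratio telescopes to
\[
I(U;Y) \ge \log_2\!\frac{var\{U\}}{var\{U-\alpha Y\}} = \log_2\!\left(1 + \frac{var\{U\}}{var\{V\}}\right),
\]
which is precisely the claimed bound, and reading off the argument of $\log_2(1+\cdot)$ identifies $var\{U\}/var\{V\}$ as the lower-bound SNR.

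I expect the only genuine subtlety — rather than a real obstacle — to be the careful justification of the maximum-entropy step in the complex setting: one must confirm that for a complex scalar constrained only through $\EX{|\,\cdot\,|^2}$, the entropy-maximizing distribution is the circularly symmetric Gaussian (equivalently, view the complex variable as a real pair and invoke the real Gaussian max-entropy result under the trace constraint on the covariance). The other point worth stressing is that $V$ need be neither Gaussian nor independent of $U$; only its uncorrelatedness with $U$ is used, and this is exactly the hypothesis supplied by the pairwise uncorrelatedness of $\mathrm{DS}_k,\mathrm{BU}_k,\mathrm{UI}_{kk'},n_k$ established just above the lemma, which is what makes the result directly applicable to the aggregated-noise decomposition in \eqref{phase 2: signal 2}.
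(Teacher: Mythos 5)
Your proof is correct and is essentially the argument the paper relies on: the paper gives no in-text proof, deferring entirely to the cited works of Lapidoth--Shamai (2002) and Yoo--Goldsmith (2006), and your LMMSE/maximum-entropy derivation is precisely the standard worst-case-uncorrelated-noise bound established there. No gaps: the zero mean of $U$ together with uncorrelatedness gives $\EX{UV^*}=0$, which is all the residual-variance computation needs, and your closing remark about why the hypothesis matches the decomposition in \eqref{phase 2: signal 2} is also accurate.
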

\begin{proof}
	The reader is referred to \cite{Mutual_information-Shamai2002} and \cite{Mutual_Information-Goldsmith2006} for detailed proofs in terms of information theory.
\end{proof}

Let $I_k\left(s_k; z_k\right)$ denote the mutual information between $s_k$ and $z_k$. Considering the second, third, and fourth terms in \eqref{phase 2: signal 2} as noises, the lower-bound for $I_k\left(s_k; z_k \right)$ can be deduced from Lemma \ref{lemma} as follows:
\begin{align}\label{lower bound}
I_k\left(s_k; z_k \right)
\geq  \log_2 (1+\snr{k})
\end{align}
where
\begin{align}
\snr{k} &= \frac{ |\mathrm{DS}_k|^2 }{ \EX{|\mathrm{BU}_k|^2} + \sum_{k'\neq k}^{K} \EX{|\mathrm{UI}_{kk'}|^2} + 1 }
%\label{phase 2: SNR 1}
\nonumber
\\
&= \frac{ \rho_s \left(\sum_{m=1}^{M} \sqrt{\eta_{mk}} \gamma_{mk}\right)^2 }
{ \rho_s \sum_{k'=1}^{K} \sum_{m=1}^{M} \eta_{mk'} \gamma_{mk'} \beta_{mk} +1 }, ~k\in\mathcal{K}
\label{phase 2: SNR 2}
\end{align}
with $\mathcal{K}=\{1,2,\ldots,K\}$. The derivation of \eqref{phase 2: SNR 2} is available in \cite[Appendix A]{Cell-free_AHien2017}. The right hand side (RHS) of \eqref{lower bound} is the achievable data rate of user $k$.

\subsubsection{The upper-bound for the mutual information between $s_1$ and $z_{\eav}$}
We rewrite \eqref{Eve: signal 1} as
\begin{align}
  z_{\eav} &= \mathrm{BU}_{\eav, 1} \times s_1 +
    \underbrace{
     \sum_{k'\neq 1}^{K} \mathrm{UI}_{\eav, k'} \times s_{k'} + n_{\eav}
    }_{\textrm{treated~as~aggregated~noise}} .
  \label{Eve: signal 2}
\end{align}
where
\begin{align}
	\mathrm{BU}_{\eav, 1} &\triangleq \sqrt{\rho_s} \sum_{m=1}^{M}
		\sqrt{\eta_{m1}} g_{m\eav} \hat{g}_{m1}^*   ,
	\nonumber \\
	\mathrm{UI}_{\eav, k'} &\triangleq \sqrt{\rho_s} \sum_{m=1}^{M} \sqrt{\eta_{mk'}} g_{m\eav} \hat{g}_{mk'}^*
	\nonumber
\end{align}
respectively represent the strength of the desired signal $s_1$ (which Eve may want to overhear) and the interference caused by the remaining users (with $k'\neq k$). It is proved that the terms $\mathrm{BU}_{\eav, k}$, $\mathrm{UI}_{\eav, k k'}$ and $n_{\eav}$ in \eqref{Eve: signal 2} are pair-wisely uncorrelated. Thus, we can consider the second and third terms in \eqref{Eve: signal 2} as noises.

Let $I_{\eav} \left(s_1; z_{\eav} \right)$ denote the mutual information between $s_1$ and $z_{\eav}$. Then the upper-bound for $I_{\eav} \left(s_k; z_{\eav} \right)$ can be formulated as follows:
\begin{align}\label{upper bound}
I_{\eav} \left(s_1; z_{\eav} \right)
&\mathop\leq\limits^{(a)} I_{\eav} \left(s_1; z_{\eav} \left| \left\{g_{mk} \right\}_{m,k}, \left\{\hat{g}_{mk} \right\}_{m,k},  \left\{g_{m\eav}\right\}_m \right. \right)
\nonumber \\
&= \EXs{}{ \log_2
	\left( 1 +
	\frac{\left| \mathrm{BU}_{\eav, 1} \right|^2 }
	{  \sum_{k'\neq 1}^{K}  | \mathrm{UI}_{\eav, k'} |^2 + 1 }
	\right)
}
\nonumber \\
&\mathop\approx\limits^{(b)} \log_2 \left( 1 + \snrE{} \right)
\end{align}
where
\begin{align}
  \snrE{} &= \frac{ \EX{|\mathrm{BU}_{\eav,1}|^2} }{ \sum_{k'\neq 1}^{K} \EX{|\mathrm{UI}_{\eav, k'}|^2} + 1 }
  \label{phase 2: SNRe_1}
  \\
  &\mathop=\limits^{(c)}
\dfrac{
		\rho_s \sum_{m=1}^{M} \eta_{m1} \gamma_{m1}
		\left(
		\frac{ \rho_{\eav}\beta_{m\eav}^2 }{\rho_u\beta_{m1}^2}  \gamma_{m1}
		+ \beta_{m\eav}
		\right)
}{ \rho_s \sum_{k'\neq 1}^{K} \sum_{m=1}^{M} \eta_{mk'} \gamma_{mk'} \beta_{m\eav} + 1}.
\label{phase 2: SNRe_2}
\end{align}
The RHS of inequality $(a)$ means that Eve perfectly knows channel gains. It also implies the worst case in terms of security. Meanwhile, the approximation $(b)$ follows \cite[Lemma 1]{Zhang-2014-Rician-MassiveMIMO}. Finally, the derivation of $(c)$ is provided in Appendix B.

\subsubsection{Achievable secrecy rate}
From \eqref{lower bound} and \eqref{upper bound}, we can define the achievable secrecy rate of user $1$ as follows:
\begin{align}
\Delta &= I_1\left(s_1; z_1 \right)
%\nonumber \\
%&\hspace{0.5cm}
- I_{\eav} \left(s_1; z_{\eav} \right)
\nonumber \\
%&\geq  \log_2\left( \frac{ 1+\snr{1} }{ 1+\snrE{} } \right)
&\geq  \log_2\left( ( 1+\snr{1} )/( 1+\snrE{} ) \right)
\triangleq R_{sec}
\end{align}
in which the explicit expressions for $\snr{1}$ and $\snr{\eav}$ are presented in \eqref{phase 2: SNR 2} and \eqref{phase 2: SNRe_2}, respectively.

%\blue{See Remark 3} \cite{Ngo-2017-TWC-NoPilots}
%
%\blue{See Remark 4 and Figure 2} \cite{Cell-free_AHien2017}
%
%$\log_2(1+snr)$ \blue{thuc su la 1 lower bound cua capacity} \cite{?}
%
%\blue{Chung ta dang xem xet gia thiet ve worst-case Gaussian noise + noise la tong cua nhieu hang tu, cho nen co duoc tight lower bound} \cite{?}
%
%\blue{Gaussian noise it the worst-case additive noise in wireless networks with additive noises that are independent from the transmit signals.} \cite{Shomorony-2013-TIT-GaussianNoise}

%\begin{remark}
%	It should be recalled that Eve sends its pilot sequence $\B{p}_{\eav}$ (which is exactly the same as $\B{p}_1$) to deceive all APs into identifying Eve as the 1st user. Thus, there is a significant difference between $\snrE{1}$ and $\snrE{k}$ ($k=2, \ldots, K$) because of the fact that the objective of Eve is $s_1$ instead of $s_2, \ldots, s_K$.
%\end{remark}

In order to facilitate further analysis in the rest of paper, we denote $\B{\Psi}$ be the matrix in which the $(m,k)$th entry is $\B{\Psi}(m,k)=\sqrt{\eta_{mk}}$. The $k$th column vector of $\B{\Psi}$ is denoted as
\[
 \B{u}_k = \B{\Psi}(:,k)= \left[ \sqrt{\eta_{1k}}, \sqrt{\eta_{2k}},\ldots,\sqrt{\eta_{Mk}} \right]^T.
\]
Besides, we also define the following matrices and vectors
\begin{align}
	&\B{a}_k = \sqrt{\rho_s} \left[ \gamma_{1k}, \gamma_{2k}, \ldots, \gamma_{Mk} \right]^T,
	\nonumber \\
	&\B{A}_{kk'} =
	\sqrt{\rho_s} \diag{ \sqrt{\beta_{1k}\gamma_{1k'}}, \ldots,\sqrt{\beta_{Mk}\gamma_{Mk'}} } ,
	%\in \mathbb{R}_+^{M\times M},
	\nonumber \\
	&\B{B}_{\eav} = \sqrt{\rho_s}
	\diag{
		\sqrt{
			\gamma_{11}
			\left( \gamma_{1\eav} + \beta_{1\eav} \right)
		},
		\ldots,
		\sqrt{
			\gamma_{M1}
			\left( \gamma_{M\eav}	+ \beta_{M\eav} \right)
		}
	}
	\nonumber \\
	&\B{B}_{k'} =
	\sqrt{\rho_s} \diag{ \sqrt{\beta_{1\eav}\gamma_{1k'}}, \ldots,\sqrt{\beta_{M\eav}\gamma_{Mk'}} } \textrm{with~} k'\neq 1 .
	%\in \mathbb{R}_+^{M\times M},
	\nonumber
\end{align}
Finally, the SNRs in \eqref{phase 2: SNR 2} and \eqref{phase 2: SNRe_2} can be rewritten in a more elegant way as follows:
\begin{align}
	\snr{k} &=
	\left. \left( \B{a}_k^T \B{u}_k \right)^2 \right/
			 {\varphi_k(\B{\Psi}) },
	\\
	\snrE{}
	&=
	\left. \left\|\B{B}_{\eav}\B{u}_{1}\right\|^2 \right/
			{\varphi_{\eav}(\B{\Psi}) }	
\end{align}
where
\begin{align}
\varphi_k(\B{\Psi})
	&= \sum_{k'=1}^{K} \left\| \B{A}_{kk'} \B{u}_{k'} \right\|^2 + 1, ~k\in\mathcal{K},
\\
\varphi_{\eav}(\B{\Psi})
	&= \sum_{k'\neq 1}^{K} \left\| \B{B}_{k'} \B{u}_{k'} \right\|^2 + 1.
\end{align}
All SNR-related expressions are now presented as functions of $\Psi$ instead of $\{\eta_{mk}\}_{m,k}$. Given that $\eta_{mk}$ decides the amount of the $m$th AP's power  destined for the $k$ user, the $(m,k)$th entry of $\Psi$ is also referred to as the factor deciding how much transmit power used by the $m$th AP and destined for the $k$ user.

%------------------------------------------------
\section{Secrecy Rate Maximization}\label{SEC: PRO 1}
%------------------------------------------------
In this section, we aim to design the matrix $\B{\Psi}$ to maximize either the achievable data rate of user 1 (in nats/s/Hz), i.e. $\ln\left(1+\snr{1}\right)$, or its achievable
secrecy rate $\ln\left(1 + \snr{1}\right)-\ln\left( 1 + \snrE{1} \right)$ in improving the secure performance of our system. Prior to performing these tasks, however, we need to impose a critical condition on the power at each AP. The power constraint is described as follows:
\begin{itemize}
\item
Let $P_{max}$ be the maximum transmit power of each AP, i.e. $P_{max} \geq \EX{|x_m|^2}$. From \eqref{phase 2: beamforming 1}, the average transmit power for the $m$th AP can be given by
\begin{align}\label{Average Power}
\EX{|x_m|^2} = P_s \sum_{k=1}^{K} \eta_{mk} \gamma_{mk}.
\end{align}
%using $\EX{|s_k|^2}=1$ and $\EX{|\hat{g}_{mk}^*|^2} = \EX{|\hat{g}_{mk}|^2} \equiv \gamma_{mk}$.
With the power constraint on every AP, we have
\begin{align}
 \sum_{k=1}^{K} \B{\Psi}^2(m,k) \gamma_{mk}  &\leq \frac{ \rho_{max} }{ \rho_s } , ~ m\in\mathcal{M}
 \label{Power Constraint}
\end{align}
with $\mathcal{M}=\{1,\ldots,M\}$. Note that $\rho_{max} = P_{max}/N_0$ is viewed as the maximum possible ratio of the $m$th AP's average transmit power to the average noise power.
\end{itemize}

Now we begin with optimizing $\B{\Psi}$ to maximize the achievable data rate of the $1$st user (who is under attack), i.e.
%\footnote{The QoS constraints on data rates are equivalent to those on $\snr{2}, \ldots, \snr{K}$ and $\snrE{1}$}
%\footnote{\color{red}DON'T make such vague explanations.
%You should
%\begin{itemize}
%\item What is the meaning of the objective function. From what formulas obtained earlier it can be defined in such way?
%\item For what (\ref{P1b}) and (\ref{P1c}) are for? What are the meaning of $\theta_E$ and $\theta_k$?
%Please have a look at  \cite{NTDP17a} and
%\cite{NTDP17b} to explain appropriately.
%\end{itemize}   }
\begin{subequations}\label{Problem P1}
\begin{eqnarray}
(\textbf{P1})
~~&\ds\max_{ \B{\Psi} }
~~& \ln \left( 1 + \left. \left( \B{a}_1^T \B{u}_1 \right)^2 \right/ { \varphi_1(\B{\Psi}) } \right)
\label{Problem P1: C1} \\
~~& \text{s.t.}
~~& \eqref{Power Constraint},
\label{P1a}\\
~~&& \frac{
 \left\|\B{B}_{\eav}\B{u}_{1}\right\|^2
}{\varphi_{\eav}(\B{\Psi})} \leq \theta_{\eav},
\label{P1b} \\
~~&& \ds\frac{ \left( \B{a}_k^T \B{u}_k \right)^2 }{ \varphi_k(\B{\Psi}) } \geq \theta_k,  ~k\in\mathcal{K}\backslash\{1\} .
\label{P1c}
\end{eqnarray}
\end{subequations}
Herein, optimizing $\B{\Psi}$ is equivalent to finding the optimal value of every power control coefficient $\eta_{mk}$ (because of the relation $\B{\Psi}(m,k) = \sqrt{\eta_{mk}}$).

The constraint \eqref{Power Constraint} is to control the transmit power at each AP as previously described. The constraint \eqref{P1b} requires that the \emph{greatest} amount of information Eve can captures will not exceed some predetermined threshold, i.e. $\ln\left(1+\snrE{}\right) \leq \ln(1+\theta_{\eav})$. Finally, the constraint \eqref{P1c} guarantees that the achievable data rate of user $~k\in\mathcal{K}\backslash\{1\}$ is equal to or greater than some target threshold, i.e. $\ln\left(1+\snr{k}\right) \geq \ln(1+\theta_k)$.

Similarly, we will optimize every $\eta_{mk}$ (through optimizing the coefficient matrix $\B{\Psi}$) to maximize the achievable secrecy rate of user 1, i.e.
\begin{subequations}\label{Q1}
\begin{eqnarray}
(\textbf{Q1})
&\ds\max_{ \B{\Psi} }
&
\ln \left(
\frac{
1 +  \left. \left( \B{a}_1^T \B{u}_1 \right)^2 \right/ { \varphi_1(\B{\Psi}) }
}{
1 + \left. \left\|\B{B}_{\eav}\B{u}_{1}\right\|^2 \right/
			{\varphi_{\eav}(\B{\Psi}) }
}
\right)
\label{Q1a}\\
&\text{s.t.}	
& \eqref{Power Constraint}, \eqref{P1c}.
\end{eqnarray}
\end{subequations}

It should be noted that both problems {\bf (P1)} and {\bf (Q1)} has been considered in \cite{NTDP17a} and
\cite{NTDP17b} in the context of conventional MIMO systems, information and energy transfer. Inspired by these two works, we also use path-following algorithms to solve non-convex optimization problems. As can be seen in the subsections below, each of the proposed path-following algorithms invokes only one simple convex quadratic program at each iteration and thus, at least a locally optimal solution can be found out.

\subsection{Solving  problem $(\textbf{P1})$}\label{subsec: P1}
We can see that the constraint \eqref{Power Constraint} is obviously convex, while (\ref{P1c}) is
the following second-order cone (SOC) constraint and thus convex:
\begin{equation}\label{C4}
\frac{1}{\sqrt{ \theta_k }} \B{a}_k^T \B{u}_k\geq  \sqrt{\varphi_k(\B{\Psi})},
	~k\in\mathcal{K}\backslash\{1\}.
\end{equation}
Besides, we observe that the objective function of (\textbf{P1}) can be replaced with $\left. \left( \B{a}_1^T \B{u}_1 \right)^2 \right/ { \varphi_1(\B{\Psi}) }$.
Let $\B{\Psi}^{(\kappa)}$ be a feasible point for (\textbf{P1}) found from the $(\kappa-1)$th iteration. By using the inequality
\begin{eqnarray}
\frac{x^2}{y}\geq 2\frac{\bar{x}}{\bar{y}}x-\frac{\bar{x}^2}{\bar{y}^2}y\quad
\forall\ x>0, y>0, \bar{x}>0, \bar{y}>0\label{ine1}
\end{eqnarray}
we obtain
\begin{equation}\label{ine3}
\frac{ \left( \B{a}_1^T \B{u}_1 \right)^2 }{ \varphi_1(\B{\Psi}) }
\geq f_1^{(\kappa)}(\B{\Psi})
\triangleq a^{(\kappa)}\B{a}_1^T \B{u}_1-b^{(\kappa)}\varphi_1(\B{\Psi})
\end{equation}
with
\begin{equation}\label{ine4}
a^{(\kappa)}=2\frac{ \left( \B{a}_1^T \B{u}_1^{(\kappa)} \right)^2 }{ \varphi_1(\B{\Psi}^{(\kappa)}) }
,~
b^{(\kappa)}=(a^{(\kappa)}/2)^2.
\end{equation}
As such, maximizing $\left. \left( \B{a}_1^T \B{u}_1 \right)^2 \right/ { \varphi_1(\B{\Psi}) }$ is now equivalent to maximizing $f_1^{(\kappa)}(\B{\Psi})$.
Finally, considering the function $\varphi_{\eav} (\B{\Psi})$ in (\ref{P1b}), we find that it is convex quadratic and thus, the non-convex constraint (\ref{P1b}) is innerly approximated by the convex quadratic constraint\footnote
{The right hand side of \eqref{snrE} is the first-order Taylor approximation of $\varphi_{\eav}(\B{\Psi})$ near $\B{\Psi}^{(\kappa)}$. With $\varphi_{\eav}(\B{\Psi})$ being convex, we have $\varphi^{(\kappa)}_{\eav}(\B{\Psi}) \leq \varphi_{\eav}(\B{\Psi})$.
}
\begin{equation}
	\Bigl. \left\|\B{B}_{\eav}\B{u}_{1}\right\|^2 \Bigr/ \theta_{\eav}
    \leq  \varphi_{\eav}^{(\kappa)}(\B{\Psi})
    \label{snrE}
\end{equation}
for
\begin{equation}\label{snrEa}
\varphi_{\eav}^{(\kappa)}(\B{\Psi})\triangleq \sum_{k\neq 1}^{K} \left[ { \B{u}_k^{(\kappa)} }^T \B{B}_k^2 \left( 2\B{u}_k - \B{u}_k^{(\kappa)} \right) \right] + 1.
\end{equation}
Having the approximations (\ref{ine3}) and (\ref{snrE}), at $\kappa$-th iteration we solve the
following convex optimization to generate a feasible point $\B{\Psi}^{(\kappa+1)}$:
\begin{equation}\label{Problem P4}
\max_{ \B{\Psi}}\ \ f_1^{(\kappa)}(\B{\Psi})\quad
\text{s.t.}\quad  \eqref{Power Constraint}, (\ref{C4}), (\ref{snrE}).
\end{equation}
The problem \eqref{Problem P4} involves $MK$ scalar real variables (because $\B{\Psi}$ has $MK$ entries) and $\epsilon = M+K$ quadratic constraints. According to \cite{NTDP17b}, the per-iteration cost to solve \eqref{Problem P4} is $\mathcal{O}\left( (MK)^2 \epsilon^{2.5} + \epsilon^{3.5} \right)$.

To find a feasible point for (\textbf{P1}) to initialize the above procedure, we address the problem
\begin{equation}\label{ini1}
	\min_{\B{\Psi}}\
	\Bigl.
	\left\|\B{B}_{\eav}\B{u}_{1}\right\|^2
	\Bigr/ \theta_{\eav}-\varphi_{\eav}(\B{\Psi})\quad\mbox{s.t.}\quad
 \eqref{Power Constraint}, (\ref{C4}).
\end{equation}
Initialized by any feasible point $\B{\Psi}^{(0)}$ for convex constraints  \eqref{Power Constraint} and
(\ref{C4}), we iterate the following optimization problem
\begin{eqnarray}\label{ini2}
	\min_{\B{\Psi}}\
	\Bigl. \left\|\B{B}_{\eav}\B{u}_{1}\right\|^2
	\Bigr/ \theta_{\eav}-\varphi_{\eav}^{(\kappa)}(\B{\Psi})
\quad\mbox{s.t.}\quad \eqref{Power Constraint}, (\ref{C4}),
\end{eqnarray}
till
\begin{equation}\label{ini3}
	\Bigl.
	\left\|\B{B}_{\eav}\B{u}_{1}^{(\kappa)}\right\|^2
	\Bigr/ \theta_{\eav} - \varphi_{\eav}\left(\B{\Psi}^{(\kappa)}\right) \leq 0,
\end{equation}
so $\B{\Psi}^{(\kappa)}$ is feasible for (\textbf{P1}).
To sum up, we provide the  following algorithm:

\begin{algorithm}[h!]
	\caption{Path-following algorithm for solving $(\textbf{P1})$}
	\begin{algorithmic}[1]
		\STATE \textbf{Initialization}: Set $\kappa = 0$ with a feasible point
		$\B{\Psi}^{(0)}$ for $(\textbf{P1})$.
		\REPEAT
		\STATE Solve \eqref{Problem P4} to obtain the optimal solution $\B{\Psi}^{(\kappa+1)}$.
		\STATE Reset $\kappa := \kappa + 1$.
		\UNTIL Converge.
		\RETURN $\B{\Psi}^{(\kappa)}$ as the desired result.
	\end{algorithmic}
\end{algorithm}

\subsection{Solving  problem $(\textbf{Q1})$}\label{subsec: Q1}
By using the inequality \cite{H.Tuy-Book-2016-Opt}
\begin{align}
\ln \left(1+\frac{x^2}{y}\right)
&\geq
\ln\left(1+\frac{\bar{x}^2}{\bar{y}}\right)
+
\frac{ \frac{\bar{x}^2 }{ \bar{y} } }
	{ 1 + \frac{ \bar{x}^2 }{ \bar{y} } }
\left( 2 - \frac{\bar{x}}{2x-\bar{x}}-\frac{y}{\bar{y}} \right)
\nonumber \\
&\textrm{for~} \forall\ x>0, \bar{x}>0, y>0, \bar{y}>0, 2x>\bar{x}
\label{ine6}
\end{align}
we obtain
\begin{align}
&\ln\left( 1 +  \frac{ \left( \B{a}_1^T \B{u}_1 \right)^2 }{ \varphi_1(\B{\Psi}) } \right)
\nonumber\\ &\geq
a^{(\kappa)}+b^{(\kappa)}\left(2-\ds\frac{\varphi_1(\B{\Psi})}{\varphi_1(\B{\Psi}^{(\kappa)})}\right.
%\nonumber\\
\left.-\ds\frac{( \B{a}_1^T \B{u}_1^{(\kappa)})^2}{2\B{a}_1^T \B{u}_1^{(\kappa)}\B{a}_1^T \B{u}_1
-(\B{a}_1^T \B{u}_1^{(\kappa)})^2}\right)
\nonumber\\ &\triangleq
f^{(\kappa)}(\B{\Psi})
\label{ine8}
\end{align}
over the trust region
\begin{equation}\label{ine9}
2\B{a}_1^T \B{u}_1^{(\kappa)}\B{a}_1^T \B{u}_1
-(\B{a}_1^T \B{u}_1^{(\kappa)})^2>0
\end{equation}
for
\begin{align}
a^{(\kappa)} &= \ln \left(1+t^{(\kappa)}\right) ,
\nonumber \\
b^{(\kappa)} &= \Bigl. t^{(\kappa)} \Bigr/ \left(1+t^{(\kappa)}\right),
\nonumber \\
t^{(\kappa)} &= \Bigl. \left( \B{a}_1^T \B{u}_1^{(\kappa)} \right)^2 \Bigr/ \varphi_1\left(\B{\Psi}^{(\kappa)}\right).
\nonumber
\end{align}
In addition, by respectively using the inequality \cite{H.Tuy-Book-2016-Opt}
\begin{equation}\label{ine7}
	\ln(1+x)\leq \ln(1+\bar{x})-\frac{\bar{x}}{1+\bar{x}}+\frac{x}{\bar{x}+1}, ~\forall\ x>0, \bar{x}>0
\end{equation}
and the fact that $\varphi^{(\kappa)}_{\eav}(\B{\Psi}) \leq \varphi_{\eav}(\B{\Psi})$ (please see Footnote 2), we obtain
\begin{align}
\ln\left( 1 +\dfrac{ \left\|\B{B}_{\eav}\B{u}_{1}\right\|^2 }
			{\varphi_{\eav}(\B{\Psi}) }\right)
&\mathop\leq\limits^{} c^{(\kappa)}
+ d^{(\kappa)}\ds\frac{
	\left\|\B{B}_{\eav}\B{u}_{1}\right\|^2
}
{\varphi_{\eav}(\B{\Psi})}
\nonumber \\
&\leq
c^{(\kappa)}+d^{(\kappa)}\ds\frac{
	\left\|\B{B}_{\eav}\B{u}_{1}\right\|^2
}
{\varphi_{\eav}^{(\kappa)}(\B{\Psi})}
\triangleq
g^{(\kappa)}(\B{\Psi})
\end{align}
over the trust region
\begin{equation}\label{ine10}
\varphi_{\eav}^{(\kappa)}(\B{\Psi})>0
\end{equation}
for
\begin{align}
c^{(\kappa)} &=
	\ln(1+t^{(\kappa)}_{\eav})- \Bigl. t^{(\kappa)}_{\eav} \Bigr/ \left(1+t^{(\kappa)}_{\eav}\right),
\nonumber \\
d^{(\kappa)} &=
	1 \Bigl. \Bigr/\left( 1+t^{(\kappa)}_{\eav} \right),
\nonumber \\
t^{(\kappa)}_{\eav} &=
	\Bigl.
	\left\|\B{B}_{\eav}\B{u}_{1}^{(\kappa)}\right\|^2
	\Bigr/ \varphi_{\eav}\left(\B{\Psi}^{(\kappa)}\right) .
\nonumber
\end{align}

Initialized by a feasible point $\B{\Psi}^{(0)}$ for the convex constraints \eqref{Power Constraint} and (\ref{C4}),
at $\kappa$-th iteration for $\kappa=0, 1,\dots,$ we solve the following convex optimization problem to generate the
next feasible point $\B{\Psi}^{(\kappa+1)}$:
\begin{subequations}\label{ine11}
\begin{eqnarray}
& \ds\max_{\B{\Psi}} & f^{(\kappa)}(\B{\Psi})-g^{(\kappa)}(\B{\Psi})
\\
& \mbox{s.t.} &
\eqref{Power Constraint}, (\ref{C4}), (\ref{ine9}), (\ref{ine10}).
\end{eqnarray}
\end{subequations}
With $MK$ scalar real variables, $2$ linear constraints and $(\epsilon - 1)$ quadratic constraints, the per-iteration cost to solve \eqref{ine11} is $\mathcal{O}\left( (MK)^2 (\epsilon - 1)^{2.5} + (\epsilon - 1)^{3.5} \right)$.

As such, the problem $(\textbf{Q1})$ can be solved by using the following algorithm:

\begin{algorithm}[h!]
	\caption{Path-following algorithm for solving $(\textbf{Q1})$}
	\begin{algorithmic}[1]
		\STATE \textbf{Initialization}: Set $\kappa = 0$ with a feasible point
		$\B{\Psi}^{(0)}$ for $(\textbf{Q1})$.
		\REPEAT
		\STATE Solve \eqref{ine11} to obtain the optimal solution $\B{\Psi}^{(\kappa+1)}$.
		\STATE Reset $\kappa := \kappa + 1$.
		\UNTIL Converge.
		\RETURN $\B{\Psi}^{(\kappa)}$ as the desired result.
	\end{algorithmic}
\end{algorithm}

%------------------------------------------------
\section{Power Minimization}\label{SEC: PRO 3}
%------------------------------------------------
In this section, we aim to design the matrix $\B{\Psi}$ to minimize the total average transmit power of all APs subject to security constraints as well as other SNR-based constraints:
\begin{subequations}\label{Problem R1}
\begin{eqnarray}
	(\textbf{R1}) &\ds \min_{\B{\Psi}}  & \sum_{m=1}^M\sum_{k=1}^{K} \B{\Psi}^2(m,k) \gamma_{mk}
	\\
	& \mbox{s.t.}
	& \eqref{Power Constraint}, (\ref{P1b}),
	\\
	&& \ds\frac{ \left( \B{a}_k^T \B{u}_k \right)^2 }{ \varphi_k(\B{\Psi}) } \geq \theta_k,  ~k\in\mathcal{K}
	\label{R1c}
\end{eqnarray}
\end{subequations}
and
\begin{subequations}\label{Problem S1}
\begin{eqnarray}
	(\textbf{S1}) &\ds \min_{\B{\Psi}}  & \sum_{m=1}^M\sum_{k=1}^{K} \B{\Psi}^2(m,k) \gamma_{mk}
	\\
	& \mbox{s.t.}
	& \eqref{Power Constraint}, \eqref{C4},
	\label{S1a}
	 \\
	&&\ln \left(
	\frac{
		1 +  \left. \left( \B{a}_1^T \B{u}_1 \right)^2 \right/ { \varphi_1(\B{\Psi}) }
	}{
		1 + \left. \left\|\B{B}_{\eav}\B{u}_{1}\right\|^2 \right/
		{\varphi_{\eav}(\B{\Psi}) }
	}
	\right)
	\geq r_{\phi}.
	\label{S1b}	
\end{eqnarray}
\end{subequations}
	Again, $\B{\Psi}(m,k)$ is the $(m,k)$th entry of the matrix $\B{\Psi}$. Due to the relation $\B{\Psi}(m,k) = \sqrt{\eta_{mk}}$, finding $\B{\Psi}$ is equivalent to finding every power control coefficient $\eta_{mk}$ ($m\in\mathcal{M}$ and $k\in\mathcal{K}$).

In addition, the objective function is the total power \emph{radiated} by the antennas of APs. The power consumed by other components (such as the backhaul and the CPU) is beyond the scope of this paper.

Note that \eqref{R1c} is not exactly the same as \eqref{C4} because \eqref{R1c} contains one more constraint, i.e. $\snr{1} \geq \theta_1$. Meanwhile, $r_{\phi}$ in the program (\textbf{S1}) is the given threshold which a designer may want to obtain. In general, we will have different results (which of course leads to different secure performances) when using (\textbf{R1}) and (\textbf{S1}). However, the obtained results can also be the same when using these programs, depending on the given values of $\theta_1$, $\theta_{\eav}$ and $r_{\phi}$.

\subsection{Solving  problem $(\textbf{R1})$}\label{subsec: R1}
At $\kappa$-th iteration, we solve the following convex optimization problem to generalize the next iterative
feasible point $\B{\Psi}^{(\kappa+1)}$
\begin{subequations}\label{R1.a}
\begin{eqnarray}
&\ds\min_{\B{\Psi}}  &\sum_{m=1}^M\sum_{k=1}^{K} \B{\Psi}^2(m,k) \gamma_{mk} \\
& \mbox{s.t.} &
\eqref{Power Constraint}, (\ref{C4}), (\ref{snrE}).
\end{eqnarray}
\end{subequations}
Similar to \eqref{Problem P4}, the computational complexity of solving \eqref{R1.a} is also $\mathcal{O}\left( (MK)^2 \epsilon^{2.5} + \epsilon^{3.5} \right)$.

Note that a feasible point $\B{\Psi}^{(0)}$ for (\textbf{R1}) can be found in the same way as $(\textbf{P1})$. Furthermore, the algorithm for solving (\textbf{R1}) is presented below.

\begin{algorithm}[h!]
	\caption{Path-following algorithm for solving $(\textbf{R1})$}
	\begin{algorithmic}[1]
		\STATE \textbf{Initialization}: Set $\kappa = 0$ with a feasible point
		$\B{\Psi}^{(0)}$ for $(\textbf{R1})$.
		\REPEAT
		\STATE Solve \eqref{R1.a} to obtain the optimal solution $\B{\Psi}^{(\kappa+1)}$.
		\STATE Reset $\kappa := \kappa + 1$.
		\UNTIL Converge.
		\RETURN $\B{\Psi}^{(\kappa)}$ as the desired result.
	\end{algorithmic}
\end{algorithm}

\subsection{Solving  problem $(\textbf{S1})$}\label{subsec: S1}
At $\kappa$-th iteration, we solve the following convex optimization problem to generalize the next iterative
feasible point $\B{\Psi}^{(\kappa+1)}$:
\begin{subequations}\label{S1.a}
\begin{eqnarray}
&\ds\min_{\B{\Psi}}  &\sum_{m=1}^M\sum_{k=1}^{K} \B{\Psi}^2(m,k) \gamma_{mk}
\\
&\mbox{s.t.}
&\eqref{Power Constraint}, (\ref{C4}),
	 \label{S1.aa}\\
&&f^{(\kappa)}(\B{\Psi})-g^{(\kappa)}(\B{\Psi})\geq r_{\phi} .
\label{S1.ab}
\end{eqnarray}
\end{subequations}
Similar to \eqref{Problem P4} and \eqref{R1.a}, the computational complexity of solving \eqref{S1.a} is also $\mathcal{O}\left( (MK)^2 \epsilon^{2.5} + \epsilon^{3.5} \right)$.

Note that a feasible point $\B{\Psi}^{(0)}$ for (\textbf{S1}) can be found like that for $(\textbf{Q1})$. Finally, we provide the detailed algorithm for solving (\textbf{S1}) as follows:

\begin{algorithm}[h!]
	\caption{Path-following algorithm for solving $(\textbf{S1})$}
	\begin{algorithmic}[1]
		\STATE \textbf{Initialization}: Set $\kappa = 0$ with a feasible point
		$\B{\Psi}^{(0)}$ for $(\textbf{S1})$.
		\REPEAT
		\STATE Solve \eqref{S1.a} to obtain the optimal solution $\B{\Psi}^{(\kappa+1)}$.
		\STATE Reset $\kappa := \kappa + 1$.
		\UNTIL Converge.
		\RETURN $\B{\Psi}^{(\kappa)}$ as the desired result.
	\end{algorithmic}
\end{algorithm}

%------------------------------------------------
\section{Optimization under Equal Power Allocation at Access Points}\label{SEC: SPECIAL CASESZ}
%------------------------------------------------
In this section, we reconsider the proposed optimization problems with $\eta_{mk}$ being equal to $\eta$ (for all $m$ and $k$) for comparison purposes.

Plugging $\eta_{mk} = \eta$ into \eqref{phase 2: SNR 2}--\eqref{phase 2: SNRe_2}, we obtain the special expressions for $\snr{k}$ and $\snr{\eav}$ as follows:
\begin{align}
	\left. \snr{k} \right|_{\eta_{mk} = \eta}
	&=  \eta \omega_k / \left( \eta \breve{\omega}_k +1 \right) ,
	\\
	\left. \snrE{} \right|_{\eta_{mk} = \eta}
	&=
	\dfrac{
			\eta \varpi
	}{ \eta \breve{\varpi} + 1}
\end{align}
where
\begin{align}
\omega_k &= \rho_s \left(\sum_{m=1}^{M} \gamma_{mk}\right)^2 ,
\nonumber \\
\breve{\omega}_k &= \rho_s \sum_{k'=1}^{K} \sum_{m=1}^{M} \gamma_{mk'} \beta_{mk} ,
\nonumber \\
\varpi &= \rho_s \sum_{m=1}^{M}  \gamma_{m1}
\left(
\frac{ \rho_{\eav}\beta_{m\eav}^2 }{\rho_u\beta_{m1}^2}  \gamma_{m1}
+ \beta_{m\eav}
\right) ,
\nonumber \\
\breve{\varpi} &= \rho_s \sum_{k'\neq 1}^{K} \sum_{m=1}^{M} \gamma_{mk'} \beta_{m\eav}.
\nonumber
\end{align}
Then,  problems (\textbf{P1}) and (\textbf{Q1}) reduce to
\begin{subequations}\label{Problem P1: special}
	\begin{eqnarray}
	(\underline{\textrm{P1}})
	&\ds\max_{ \eta }
	& \eta \omega_1 / \left( \eta \breve{\omega}_1 +1 \right)
	\\
	& \text{s.t.}
	& \eta \leq \frac{ \rho_{max}/\rho_s }{  \sum_{k=1}^K \gamma_{mk} } , ~m\in\mathcal{M}
	\label{special P1: b} \\
	&& \eta \left(\varpi  - \theta_{\eav} \breve{\varpi} \right) \leq \theta_{\eav} ,
	\label{special P1: c} \\
	&& \eta \left(\omega_k - \theta_k \breve{\omega}_k \right) \geq \theta_k,  ~k\in\mathcal{K}\backslash\{1\}
	\label{special P1: d}
	\end{eqnarray}
\end{subequations}
and
\begin{subequations}\label{Problem Q1: special}
	\begin{eqnarray}
	(\underline{\textrm{Q1}})
	~&\ds\max_{ \eta }
	~& \left( 1 + \frac{ \eta \omega_1 }{  \eta \breve{\omega}_1 +1 } \right)
	\left/
	 \left( 1 + \frac{ \eta \varpi }{ \eta \breve{\varpi} + 1 } \right)
	\right.
	\\
	~& \text{s.t.}
	~& \eqref{special P1: b}, \eqref{special P1: d}.
	\label{special Q1: b}
	\end{eqnarray}
\end{subequations}
Similarly,  problems (\textbf{R1}) and (\textbf{S1}) reduce to
\begin{subequations}\label{Problem R1: special}
	\begin{eqnarray}
	(\underline{\textrm{R1}}) &\ds \min_{\eta}  & \eta
	\\
	& \mbox{s.t.}
	& \eqref{special P1: b}, \eqref{special P1: c},
	\\
	&& \frac{ \eta \omega_k }{ \left( \eta \breve{\omega}_k +1 \right) } \geq \theta_k, ~k\in\mathcal{K}
	\label{special R1: c}
	\end{eqnarray}
\end{subequations}
and
\begin{subequations}\label{Problem S1: special}
	\begin{eqnarray}
	(\underline{\textrm{S1}}) &\ds \min_{\eta}  & \eta
	\\
	& \mbox{s.t.}
	& \eqref{special P1: b}, \eqref{special P1: d},
	\\
	&& \frac{  1 + \eta \omega_1 / \left( \eta \breve{\omega}_1 +1 \right)
	}
	{ 1 + \eta \varpi / \left( \eta \breve{\varpi} + 1 \right)
	}
	\geq \phi.
	\label{special S1: c}
	\end{eqnarray}
\end{subequations}

%On the one hand, the newly-formed problems (\underline{\textrm{P1}}), (\underline{\textrm{Q1}}), (\underline{\textrm{R1}}) and (\underline{\textrm{S1}}) are respectively the simplified versions of (\textbf{P1}), (\textbf{Q1}), (\textbf{R1}) and (\textbf{S1}).
%On the other hand, the methods presented in Section \ref{SEC: PRO 1} and Section \ref{SEC: PRO 3} can also be used to solve these new problems. Thus, we will not present the methods of solving the new problems here.

\subsection{Closed-form solutions to (\underline{\textrm{P1}})}\label{subsec: under P1}
The objective function of (\underline{\textrm{P1}}) increases in $\eta$. Hence, maximizing that objective function is equivalent to maximizing $\eta$. In other words, we will solve the following problem
%\begin{subequations}
%	\begin{eqnarray}
%	&\ds\max_{ \eta }
%	& \eta
%	\\
%	& \text{s.t.}
%	& \eta \leq \min_{m=1,\ldots,M} \left\{ \frac{ \rho_{max}/\rho_s }{  \sum_{k=1}^K \gamma_{mk} } \right\} ,
%	\\
%	&& \eta \left( \varpi - \theta_{\eav} \breve{\varpi} \right) \leq \theta_{\eav} ,
%	\\
%	&& \eta \left( \omega_k - \theta_k \breve{\omega}_k \right) \geq \theta_k ,  k\in\{2,\ldots,K\}.
%	\end{eqnarray}
%\end{subequations}

\begin{subequations}
	\begin{eqnarray}
	(\underline{\textrm{P1}})~
	&\ds\max_{ \eta }
	& \eta
	\\
	& \text{s.t.}
	& \eqref{special P1: b}, \eqref{special P1: c}, \eqref{special P1: d}.
	\end{eqnarray}
\end{subequations}
In order for \eqref{special P1: d} to be meaningful, we need the condition
\begin{align}\label{theta_k: necessary cond. 1}
\left(\omega_k - \theta_k \breve{\omega}_k \right) > 0
\Leftrightarrow
\theta_k < \omega_k / \breve{\omega}_k
\end{align}
with $k\in\mathcal{K}\backslash\{1\}$. If $\theta_k$ satisfies the above condition, we can infer from both \eqref{special P1: b} and \eqref{special P1: d} the following:
\begin{align}\label{theta_k: necessary cond. 2}
\underbrace{
	\max_{ k\in\mathcal{K}\backslash\{1\} } \left\{ \frac{\theta_k}{ \omega_k - \theta_k \breve{\omega}_k } \right\}
}_{\geq 0}
\leq \eta \leq
\underbrace{
\min_{m\in\mathcal{M}} \left\{ \frac{ \rho_{max}/\rho_s }{  \sum_{k=1}^K \gamma_{mk} } \right\}
}_{> 0} .
\nonumber
\end{align}
This also implies another necessary condition as follows:
\begin{align}
\theta_k <
\frac{
	\omega_k \min_{m\in\mathcal{M}} \left\{ \frac{ \rho_{max}/\rho_s }{  \sum_{k=1}^K \gamma_{mk} } \right\}
}{
	1 + \breve{\omega}_k
	\min_{m\in\mathcal{M}} \left\{ \frac{ \rho_{max}/\rho_s }{  \sum_{k=1}^K \gamma_{mk} } \right\}
}
\end{align}
for each $k\in\mathcal{K}\backslash\{1\}$. The two conditions \eqref{theta_k: necessary cond. 1} and \eqref{theta_k: necessary cond. 2} are now rewritten in the following form:
\begin{align}\label{nec. cond.}
\theta_k &<
\min \left\{
\frac{\omega_k }{ \breve{\omega}_k } ,
\frac{
	\omega_k \min_{m\in\mathcal{M}} \left\{ \frac{ \rho_{max}/\rho_s }{  \sum_{k=1}^K \gamma_{mk} } \right\}
}{
	1 + \breve{\omega}_k
	\min_{m\in\mathcal{M}} \left\{ \frac{ \rho_{max}/\rho_s }{  \sum_{k=1}^K \gamma_{mk} } \right\}
}
\right\}
\end{align}
with $k\in\mathcal{K}\backslash\{1\}$. Once \eqref{nec. cond.} has been satisfied, the solution to (\underline{\textrm{P1}}) can be given by
\begin{itemize}
\item either
	\begin{align}
	\eta_{\left(\underline{\textrm{P1}}\right)}^{\star}
	= \min_{m\in\mathcal{M}}
	\left\{
	\frac{ \rho_{max}/\rho_s }{  \sum_{k=1}^K \gamma_{mk} }
	\right\}
	\end{align}
	for
	\begin{align}\label{thetaE: cond. 1}
	\theta_{\eav} \geq \varpi/\breve{\varpi}
	\end{align}
\item or
	\begin{align}
	\eta_{\left(\underline{\textrm{P1}}\right)}^{\star}
	= \min_{m\in\mathcal{M}}
	\left\{
	\frac{ \rho_{max}/\rho_s }{  \sum_{k=1}^K \gamma_{mk} } ,
	\frac{ \theta_{\eav} }{ \left( \varpi - \theta_{\eav} \breve{\varpi} \right) }
	\right\}
	\end{align}
	for
	\begin{align} \label{thetaE: cond. 2}
	\dfrac{ \varpi \max_{ k\in\mathcal{K}\backslash\{1\} } \left\{ \frac{\theta_k}{ \omega_k - \theta_k \breve{\omega}_k } \right\} }
	{ 1 + \breve{\varpi} \max_{ k\in\mathcal{K}\backslash\{1\} } \left\{ \frac{\theta_k}{ \omega_k - \theta_k \breve{\omega}_k } \right\} }
	\leq \theta_{\eav} < \varpi/\breve{\varpi} .
	\end{align}
\end{itemize}

\subsection{Closed-form solution to (\underline{\textrm{Q1}})}\label{subsec: under Q1}
As presented in the previous subsection, \eqref{nec. cond.} is necessary in order that (\underline{\textrm{Q1}}) can be solved. Then we can rewrite (\underline{\textrm{Q1}}) as
\begin{subequations}
	\begin{eqnarray}
	(\underline{\textrm{Q1}})
	~&\ds\max_{ \chi }
	~& l(\chi)
	\\
	~& \text{s.t.}
	~&  0 \leq \chi \leq \overline{\alpha}
	\end{eqnarray}
\end{subequations}
where
\begin{align}
\chi &\triangleq \eta - \underline{\alpha} ,
\nonumber \\
\overline{\alpha} &\triangleq
\min_{m\in\mathcal{M}} \left\{ \frac{ \rho_{max}/\rho_s }{  \sum_{k=1}^K \gamma_{mk} } \right\} - \underline{\alpha} ,
\nonumber \\
\underline{\alpha} &\triangleq \max_{ k\in\mathcal{K}\backslash\{1\} } \left\{ \frac{\theta_k}{ \omega_k - \theta_k \breve{\omega}_k } \right\}
\nonumber
\end{align}
and
\begin{align}
l(\chi) &=
	\tfrac{
		\chi \left( \omega_1+\breve{\omega}_1 \right) + \underline{\alpha} \left(\omega_1+\breve{\omega}_1\right) + 1
	}{
		\chi \breve{\omega}_1 + \underline{\alpha} \breve{\omega}_1 + 1
	}
%	\nonumber \\
%	&\hspace{0.5cm} \times
	\tfrac{
		\chi \breve{\varpi} + \underline{\alpha} \breve{\varpi} + 1
	}{
		\chi \left( \varpi + \breve{\varpi} \right) + \underline{\alpha} \left( \varpi + \breve{\varpi} \right) + 1
	}.
	\nonumber
\end{align}
Introducing a new variable $\tau \geq 0$ and defining a Lagrangian function $\mathcal{L}\left(\chi, \tau \right) \triangleq l(\chi) - \tau (\chi - \overline{\alpha})$, we first consider two sub-cases:
\begin{itemize}
	\item For $\tau=0$, we solve $\frac{\partial l\left(\chi\right)}{\partial \chi} = 0$ to obtain two \emph{positive-real} critical points $\chi = \chi_1$ and $\chi = \chi_2$ (if possible).
	\item For $\tau>0$, we solve the system of two equations
\begin{align}
\left\{
\begin{array}{ll}
\dfrac{\partial \mathcal{L}\left(\chi, \tau \right)}{\partial \tau} = 0  \\
\dfrac{\partial \mathcal{L}\left(\chi, \tau \right)}{\partial \chi} = 0
\end{array}
\right.
\Leftrightarrow
\left\{
\begin{array}{ll}
\chi = \overline{\alpha} \\
\tau = \left. \dfrac{\partial l\left(\chi\right)}{\partial \chi} \right|_{ \chi = \overline{\alpha} }
\end{array}
\right.
\nonumber
\end{align}
to obtain another critical point $\chi = \overline{\alpha} \triangleq \chi_3$.
\end{itemize}
Then the optimal solution to $\left(\underline{\textrm{Q1}}\right)$ can be given by
\begin{align}
	\eta_{\left(\underline{\textrm{Q1}}\right)}^{\star}
	&= \underline{\alpha} + \argmax_{\chi \in\{ \chi_1, \chi_2, \chi_3 \} } l\left( \chi \right).
\end{align}

\subsection{Closed-form solution to (\underline{\textrm{R1}})}\label{subsec: under R1}
Similar to (\underline{\textrm{P1}}), we first need the condition \eqref{nec. cond.} with $k\in\{1,\ldots,K\}$ in order that (\underline{\textrm{R1}}) can be solved. Then we can attain the solution to (\underline{\textrm{R1}}), i.e.
\begin{align}
	\eta_{\left(\underline{\textrm{R1}}\right)}^{\star}
		= \max_{ k\in\mathcal{K} } \left\{ \frac{ \theta_k }{ \omega_k - \theta_k \breve{\omega}_k } \right\} ,
\end{align}
in the case that either \eqref{thetaE: cond. 1} or \eqref{thetaE: cond. 2} is satisfied.

\subsection{Closed-form solutions to (\underline{\textrm{S1}})}\label{subsec: under S1}
For (\underline{\textrm{S1}}), the condition \eqref{nec. cond.} (with $k\in\{2,\ldots,K\}$) is also required. The third constraint \eqref{special S1: c} is rewritten in the form $\breve{a} \eta^2 + \breve{b} \eta + \breve{c} \geq 0 $ with $\breve{a} = \breve{\varpi} \left( \omega_1 + \breve{\omega}_1 \right) - \phi \breve{\omega}_1 \left( \breve{\varpi} + \varpi \right) $, $\breve{b} = \omega_1 + \breve{\omega}_1 + \breve{\varpi} - \phi \left( \breve{\omega}_1 + \breve{\varpi} + \varpi \right) $ and $\breve{c} = 1 - \phi$. As such, there are two possibilities as follows:
\begin{itemize}
	\item If $\breve{a} > 0$, then \eqref{special S1: c} always holds for $\breve{b}^2 - 4 \breve{a} \breve{c} \leq 0$. In this case, the solution to (\underline{\textrm{S1}}) is given by
\begin{align}
\eta_{\left(\underline{\textrm{S1}}\right)}^{\star}
= \max_{ k\in\mathcal{K}\backslash\{1\} } \left\{ \frac{ \theta_k }{ \omega_k - \theta_k \breve{\omega}_k } \right\}.
\end{align}
	\item If $\breve{a} < 0$, then \eqref{special S1: c} holds for $\breve{b}^2 - 4 \breve{a} \breve{c} > 0$ and $\eta_1 \leq \eta \leq \eta_2$ given that $\eta_1$ and $\eta_2$ are the solutions to the quadratic equation $\breve{a} \eta^2 + \breve{b} \eta + \breve{c} = 0 $. In this case, (\underline{\textrm{S1}}) is infeasible if $\eta_2<0$; otherwise, the solution to (\underline{\textrm{S1}}) is given by
\begin{align}
\eta_{\left(\underline{\textrm{S1}}\right)}^{\star}
= \max_{ k\in\mathcal{K}\backslash\{1\} } \left\{ \frac{ \theta_k }{ \omega_k - \theta_k \breve{\omega}_k } , \eta_1\right\}.
\end{align}
\end{itemize}

%------------------------------------------------
\section{Numerical Results}\label{SEC: RESULT}
%------------------------------------------------
In this section, we evaluate the secure performance and make comparisons for different scenarios. More specifically, we measure the secure performance by calculating $R_{sec}$ (in nats/s/Hz) at
\begin{itemize}
	\item $\B{\Psi} = \B{\Psi}_{\left( \textbf{P1} \right)}^{\star}$ (the solution to $\left( \textbf{P1} \right)$);
	\item $\B{\Psi} = \B{\Psi}_{\left( \textbf{Q1} \right)}^{\star}$ (the solution to $\left( \textbf{Q1} \right)$);
	\item $\B{\Psi} = \B{\Psi}_{\left( \textbf{R1} \right)}^{\star}$ (the solution to $\left( \textbf{R1} \right)$);
	\item $\B{\Psi} = \B{\Psi}_{\left( \textbf{S1} \right)}^{\star}$ (the solution to $\left( \textbf{S1} \right)$);
	\item $\eta_{ \left(\underline{\textrm{P1}}\right) }^{\star}$ (the solution to $\left( \underline{\textrm{P1}} \right)$);
	\item $\eta_{ \left(\underline{\textrm{Q1}}\right) }^{\star}$ (the solution to $\left( \underline{\textrm{Q1}} \right)$);
	\item $\eta_{ \left(\underline{\textrm{R1}}\right) }^{\star}$ (the solution to $\left( \underline{\textrm{R1}} \right)$);
	\item $\eta_{ \left(\underline{\textrm{S1}}\right) }^{\star}$ (the solution to $\left( \underline{\textrm{S1}} \right)$).
\end{itemize}
For each case, the obtained value of $R_{sec}$ will be denoted by
$R_{sec} \left( \textbf{P1} \right)$,
$R_{sec} \left( \textbf{Q1} \right)$,
$R_{sec} \left( \textbf{R1} \right)$,
$R_{sec} \left( \textbf{S1} \right)$,
$R_{sec} \left( \underline{\textrm{P1}} \right)$,
$R_{sec} \left( \underline{\textrm{Q1}} \right)$,
$R_{sec} \left( \underline{\textrm{R1}} \right)$
and $R_{sec} \left( \underline{\textrm{S1}} \right)$, respectively.
Likewise, the notation
$P_{tot} \left( \textbf{R1} \right)$,
$P_{tot} \left( \textbf{S1} \right)$,
$P_{tot} \left( \underline{\textrm{R1}} \right)$,
and $P_{tot} \left( \underline{\textrm{S1}} \right)$
will stand for
\emph{``
	the total average transmit power of all APs at}
	$\B{\Psi} = \B{\Psi}_{\left( \textbf{R1} \right)}^{\star}$,
	$\B{\Psi} = \B{\Psi}_{\left( \textbf{S1} \right)}^{\star}$,
	$\eta = \eta_{\left( \underline{\textrm{R1}} \right)}^{\star}$,
	\emph{and} $\eta = \eta_{\left( \underline{\textrm{S1}} \right)}^{\star}$,
	\emph{respectively.''}

As for simulation parameters, we use the Hata-COST231 model (see \cite{Cell-free_AHien2017, Rappaport-Book-2002} and \cite{Chen-2006-VTC-HataModel}) to imitate the large scale fading coefficients, i.e.
\begin{align}
	\beta_{mk} &= 10^{ \left( \mathcal{S} + PL\left(d_{mk}\right) \right)/10 },
	\\
	\beta_{m\eav} &= 10^{ \left( \mathcal{S} + PL\left(d_{m\eav}\right) \right) /10 }
\end{align}
where $\mathcal{S} \sim \CN{0}{\sigma_{\mathcal{S}}^2}$ presents the shadowing fading effect with the standard deviation $\sigma_{\mathcal{S}} = 8$ dB and
\begin{align}\label{PL_mk}
	PL\left(d\right) &=
	\begin{cases}
		-139.4 - 35 \log_{10}(d) ~\textrm{if}~ d>0.05 \\
		-119.9 - 20 \log_{10}(d) ~\textrm{if}~ d \in (0.01, 0.05] \\
		-79.9 ~\textrm{if}~ d \leq 0.01
	\end{cases}
\end{align}
represents the path loss in dB with $d\equiv d_{mk}$ (or $d\equiv d_{m\eav}$) being the distance in km between the $m$th AP and user $k$ (or Eve).\footnote{Other presentations for $PL\left(d\right)$ are also available in literature. Herein, \eqref{PL_mk} is suggested for a practical scenario in which the carrier frequency is 1900 MHz, the heigh of each AP antenna is 20 m, the heigh of each user antenna (as well as that of Eve antenna) is 1.5 m and all nodes (APs, users and Eve) are randomly dispersed over a square of size $1\times 1$ km${}^{2}$ \cite[Eqs. (52) and (53)]{Cell-free_AHien2017}.} In addition, the maximum transmit power of each AP is $P_{max}=1$ W. Meanwhile, the average noise power (in W) is given by
\begin{align}
	N_0 = \textrm{bandwidth} \times k_B \times T_0 \times \textrm{noise~figure}
\end{align}
where $k_B = 1.38\times 10^{-23}$ (Joule/Kelvin) is the Boltzmann constant, and $T_0 = 290$ (Kelvin) is the noise temperature. In all simulation results, we suppose that the bandwidth is $20$ MHz and the noise figure is $9$ dB. Finally, other parameters will be mentioned whenever they are used.

\begin{figure}[t!]
   \centerline{\includegraphics[width=0.5\textwidth]{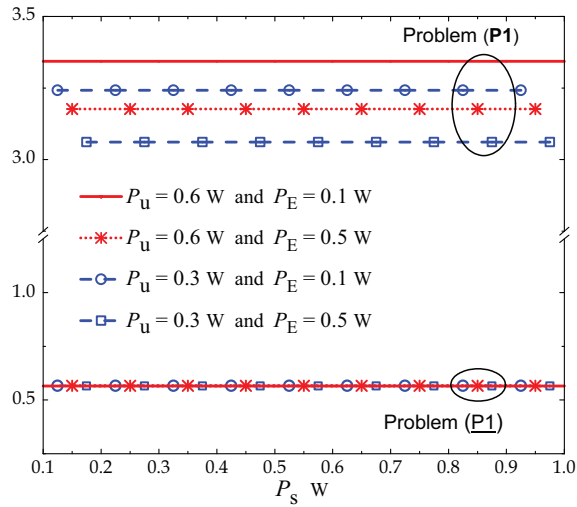}}
   \caption{Secrecy rate versus $P_s$ (the average transmit power for a signal $s_k$). Other parameters: the average transmit power of each user is $P_u = \{0.3, 0.6\}$ W, the average transmit power of Eve is $P_{\eav}= \{0.1, 0.5\}$ W, $M = 50$, $K = 8$, $T = 12$, $\theta_{\eav} = 10^{-4}$, and $\theta_k = 2\times 10^{-4}$ for $k=\{2,\ldots,K\}$.}
   \label{result 1}
\end{figure}

%\begin{figure}[t!]
%	\centerline{\includegraphics[width=0.45\textwidth]{Figures/P1_Cs_versus_M_ver2.pdf}}
%	\caption{Secrecy rate versus $M$ with $P_s = 0.7$ W, $P_u = \{0.3, 0.6\}$ W, $P_{\eav}= \{0.1, 1\}$ W, $K=5$, $T = 12$, $\theta_{\eav} = 10^{-4}$, and $\theta_k = 2\times 10^{-4}$ for $k=\{2,\ldots,K\}$.}
%	\label{result 2}
%\end{figure}

\begin{figure}[t!]
	\centerline{\includegraphics[width=0.5\textwidth]{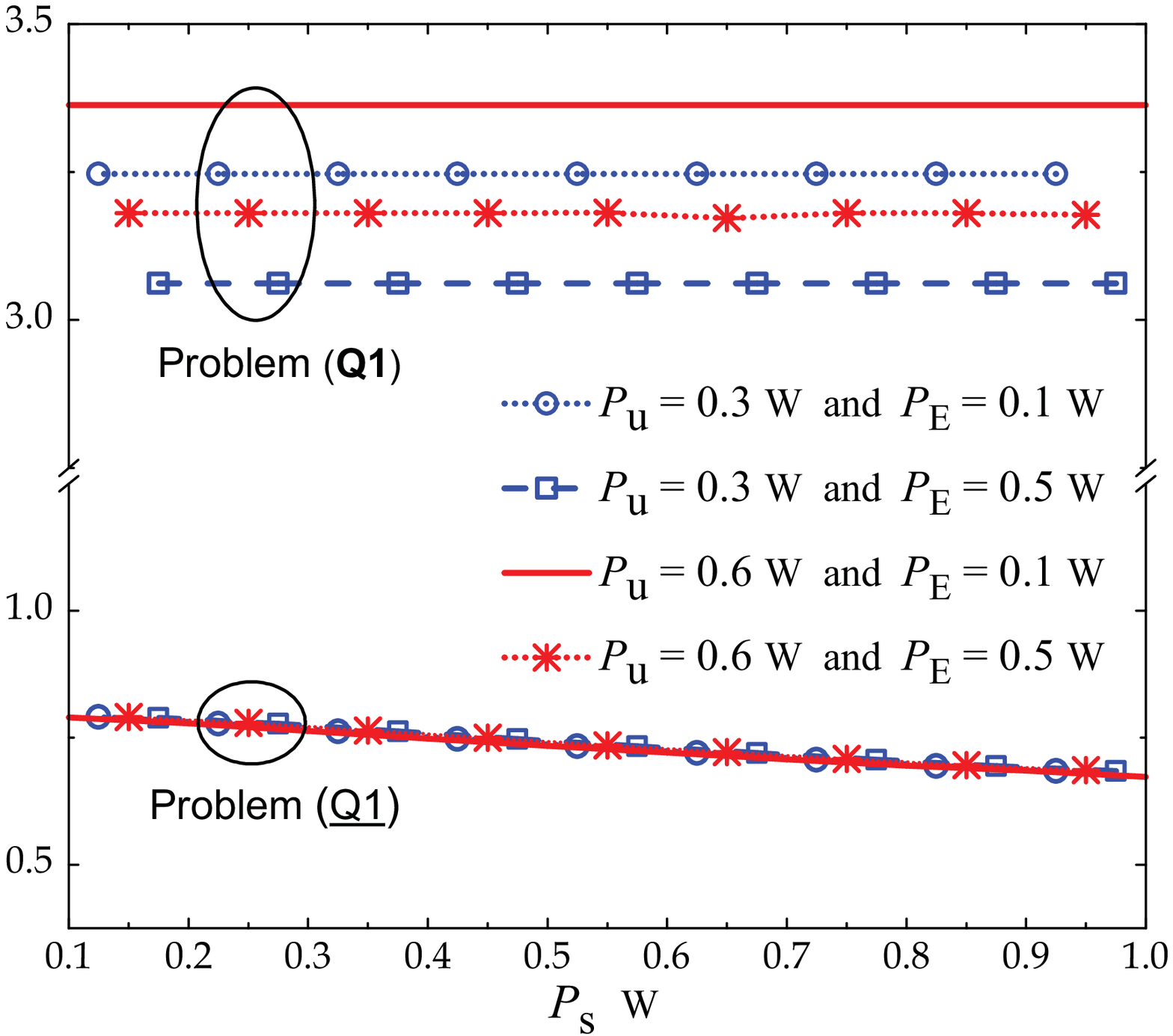}}
	\caption{Secrecy rate versus $P_s$ (the average transmit power for a signal $s_k$). Other parameters: the average transmit power of each user is $P_u = \{0.3, 0.6\}$ W, the average transmit power of Eve is $P_{\eav}= \{0.1, 0.5\}$ W, $M = 50$, $K = 8$, $T = 12$ and $\theta_k = 2\times 10^{-4}$ for $k=\{2,\ldots,K\}$.}
	\label{result 3}
\end{figure}

\begin{figure}[t!]
	\centerline{\includegraphics[width=0.5\textwidth]{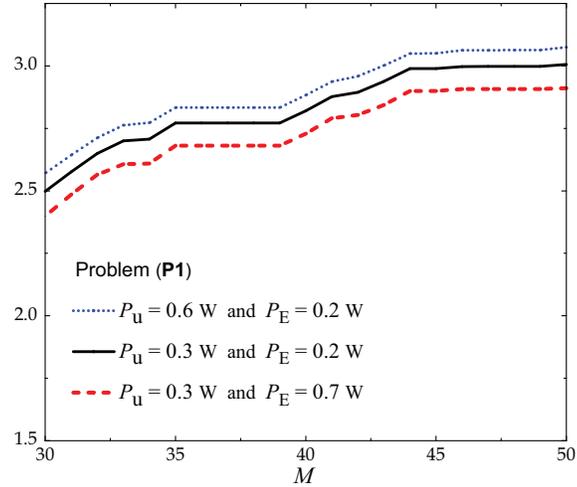}}
	\caption{Secrecy rate versus $M$. Other parameters: the average transmit power for a signal $s_k$ is $P_s = 0.8$ W, the average transmit power of each user is $P_u = \{0.3, 0.6\}$ W, the average transmit power of Eve is $P_{\eav}= \{0.2, 0.7\}$ W, $K=8$, $T = 12$, $\theta_k = 2\times 10^{-4}$ for $k=\{2,\ldots,K\}$ and $\theta_{\eav} = \theta_k / 50$.}
	\label{result 2b}
\end{figure}

\begin{figure}[t!]
	\centerline{\includegraphics[width=0.5\textwidth]{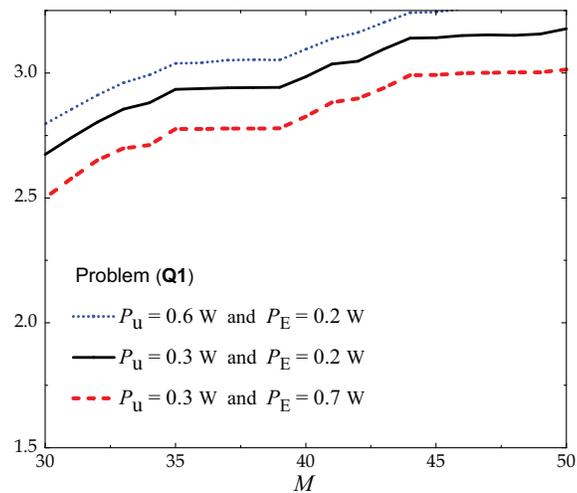}}
	\caption{Secrecy rate versus $M$. Other parameters: the average transmit power for a signal $s_k$ is $P_s = 0.8$ W, the average transmit power of each user is $P_u = \{0.3, 0.6\}$ W, the average transmit power of Eve is $P_{\eav}= \{ 0.2, 0.7\}$ W, $K = 8$, $T = 12$ and $\theta_k = 2\times 10^{-4}$ for $k=\{2,\ldots,K\}$.}
	\label{result 4}
\end{figure}

In Figure \ref{result 1}, we show the achievable secrecy rate (in nats/s/Hz) in 2 different cases: i) $\B{\Psi} = \B{\Psi}_{\left(\textbf{P1}\right)}^{\star}$ and ii) $\eta = \eta_{\left(\underline{\textrm{P1}}\right)}^{\star} $. For each case, $3$ different sub-cases of $\left(P_u, P_{\eav}\right)$ are considered. It is observed that $R_{sec} \left( \textbf{P1} \right)$ is significantly higher than $R_{sec} \left( \underline{\textrm{P1}} \right)$. In fact, the obtained values of $R_{sec} \left( \underline{\textrm{P1}} \right)$ fall within the interval $\left(0.55, 0.57\right)$ nats/s/Hz. In other words, having $\eta_{mk} = \eta^{\star}$ (for all $m$ and $k$) will lead to very poor performance in terms of security. Furthermore, the secure performance increases with $P_u$ and reduces with $P_{\eav}$ (the average transmit power of Eve).

Figure \ref{result 3} shows the achievable secrecy rate versus $P_s$ in two cases: i) $\B{\Psi} = \B{\Psi}_{\left(\textbf{Q1}\right)}^{\star}$ and ii) $\eta = \eta_{\left(\underline{\textrm{Q1}}\right)}^{\star} $. The secure performance in the first case is significantly higher than the second case. Moreover, the changes in the value of $R_{sec}\left(\underline{\textrm{Q1}}\right)$ are minor, i.e. $R_{sec}\left(\underline{\textrm{Q1}}\right)$ falls within $(0.67,0.79)$ nats/s/Hz. We also observe that $R_{sec} \left( \textbf{Q1} \right)$ is improved with increasing $P_u$ and is impaired with $P_{\eav}$. Meanwhile, $R_{sec}\left(\underline{\textrm{Q1}}\right)$ slightly decreases with $P_u$.

\begin{figure}[t!]
	\centerline{\includegraphics[width=0.5\textwidth]{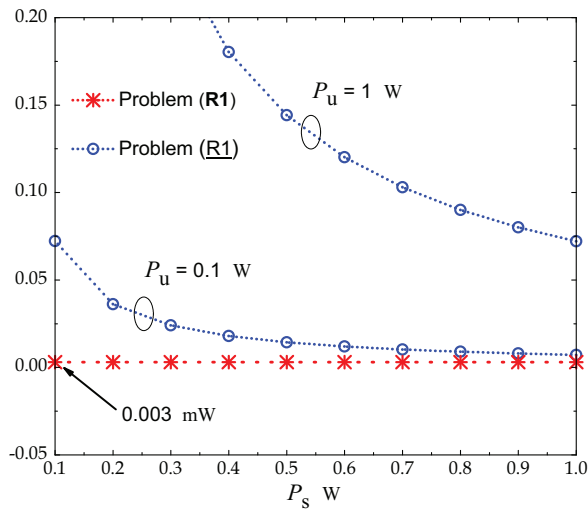}}
	\caption{Total power of all APs (in mW) versus $P_s$ (the average transmit power for a signal $s_k$). Other parameters: the average transmit power of each user is $P_u = \{0.1, 1\}$ W, the average transmit power of Eve is $P_{\eav}= 0.5$ W, $M = 50$, $K= 8$, $T = 12$, $\theta_1 = 0.1$, $\theta_k = 0.02$ for $k=\{2,\ldots,K\}$ and $\theta_{\eav} = \theta_1/50$.}
	\label{result 5}
\end{figure}

\begin{figure}[t!]
	\centerline{\includegraphics[width=0.5\textwidth]{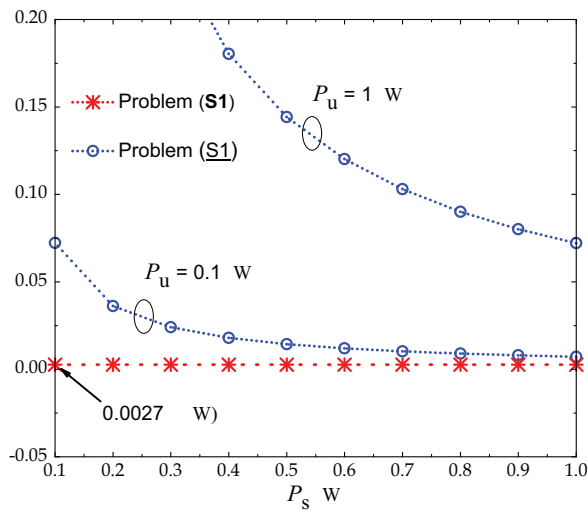}}
	\caption{Total power of all APs (in mW) versus $P_s$ (the average transmit power for a signal $s_k$). Other parameters: the average transmit power of each user is $P_u = \{0.1, 1\}$ W, the average transmit power of Eve is $P_{\eav}= 0.5$ W, $M = 50$, $K= 8$, $T = 12$, $\theta_k = 0.02$ for $k=\{2,\ldots,K\}$ and $\phi=1$.}
	\label{result 7}
\end{figure}

\begin{figure}[t!]
	\centerline{\includegraphics[width=0.5\textwidth]{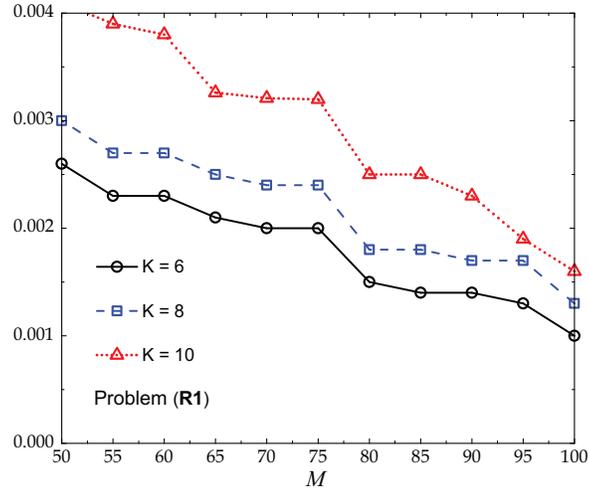}}
	\caption{Total power of all APs (in mW) versus $M$. Other parameters: the average transmit power for a signal $s_k$ is $P_s=0.7$ W, the average transmit power of each user is $P_u = 0.4$ W, the average transmit power of Eve is $P_{\eav}= 0.5$ W, $K= \{6,8,10\}$, $T = 12$, $\theta_1 = 0.1$, $\theta_k = 0.02$ for $k=\{2,\ldots,K\}$ and $\theta_{\eav} = \theta_1/50$.}
	\label{result 6}
\end{figure}

\begin{figure}[t!]
	\centerline{\includegraphics[width=0.5\textwidth]{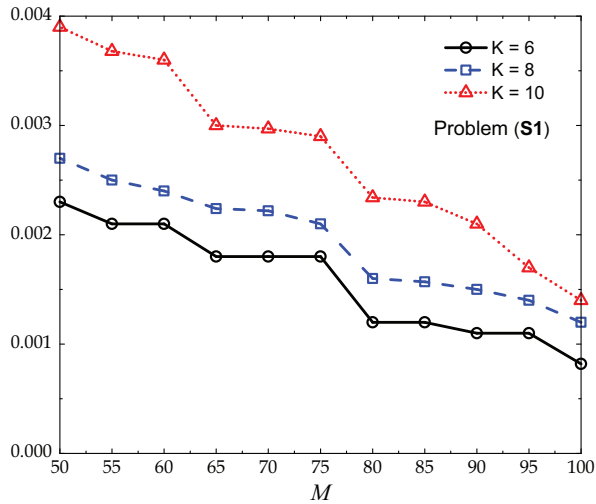}}
	\caption{Total power of all APs (in mW) versus $M$. Other parameters: the average transmit power for a signal $s_k$ is $P_s=0.7$ W, the average transmit power of each user is $P_u = 0.4$ W, the average transmit power of Eve is $P_{\eav}= 0.5$ W, $K = \{6,8,10\}$, $T = 12$, $\theta_k = 0.02$ for $k=\{2,\ldots,K\}$ and $\phi=1$.}
	\label{result 8}
\end{figure}

In Figures \ref{result 2b} and \ref{result 4}, the achievable secrecy rates $R_{sec}\left(\textbf{P1}\right)$ and $R_{sec}\left(\textbf{Q1}\right)$ are depicted as functions of $M$. We can see that both of them increase with $M$. It implies that the more service APs we have, the higher secure performance we gain. Finally, $R_{sec} \left(\textbf{P1}\right)$ as well as $R_{sec}\left(\textbf{Q1}\right)$ increases with $P_u$ and decreases with $P_{\eav}$. With the chosen parameters, $\left(\textbf{Q1}\right)$ appears better than $\left(\textbf{P1}\right)$ in terms of secrecy rate. Overall, $P_{\eav}$ represents the strength of an actively eavesdropping attack; thus, we can observe that the secure performance is degraded when $P_{\eav}$ grows as shown in Figures 2-5.

Figure \ref{result 5} shows that $P_{tot} \left( \underline{\textrm{R1}} \right)$ is much higher than $P_{tot} \left( \textbf{R1} \right)$ which is around $0.003$ mW with every $P_s$. It means that the solution $\B{\Psi}_{\left(\textbf{R1}\right)}^{\star}$ is much better than the solution $\eta_{ \left(\underline{\textrm{P1}}\right) }^{\star} $ in terms of energy, because the APs do not have to consume too much energy to meet security requirements. Besides, the figure also shows that $P_{tot} \left( \underline{\textrm{R1}} \right)$ inversely decreases with $P_s$ and is lowest at $P_s=P_{max}$. Finally, we observe that when $P_s$ changes, $R_{sec} \left( \textbf{R1} \right) \approx 0.0953$ nats/s/Hz remains almost constant; meanwhile, $R_{sec} \left( \underline{\textrm{R1}} \right) \approx 0.5386$ nats/s/Hz with $P_u=0.1$ W and $0.5091$ nats/s/Hz with $P_u=1$ W.

Figure \ref{result 7} shows that $P_{tot} \left( \underline{\textrm{S1}} \right)$ is much higher than $P_{tot} \left( \textbf{S1} \right)$ which is around $0.0027$ mW at each considered value of $P_s$. This result also reveals that $\left(\textbf{R1}\right)$ is the better program in terms of energy, because there is really less energy required for security. Besides, the figure also shows that $P_{tot} \left( \underline{\textrm{S1}} \right)$ inversely decreases with $P_s$ and is lowest at $P_s=P_{max}$. Finally, we record that $R_{sec} \left( \textbf{S1} \right) \approx 0^{+}$ nats/s/Hz when $P_s$ changes. In contrast, $R_{sec} \left( \underline{\textrm{S1}} \right) \approx 0.4619$ nats/s/Hz with $P_u=0.1$ W and $0.4521$ nats/s/Hz with $P_u=1$W.

Figure \ref{result 6} depicts $P_{tot} \left( \textbf{R1} \right)$ as a function of $M$. With 3 different values of $K$, we observe that the total power consumption reduces with $M$ but increases with $K$. We can see that $\left( \textbf{R1} \right)$ can be solved with many different values of $(M,K)$. Among them, the best choice is to choose $M$ as large as possible while $K$ should be as small as possible. For example, the system with $(M,K)=(70,6)$ will require less power consumption (at APs) than the system with $(M,K)=(50,10)$, while the security constraints remain guaranteed.

Figure \ref{result 8} depicts $P_{tot} \left( \textbf{S1} \right)$ as a function of $M$. Our observation of this figure is similar to Figure \ref{result 6}. We should choose $M$ as large as possible and $K$ as small as possible in order to attain the best performance (as long as the security constraints are satisfied). When $M$ is large enough, the total power consumption is nearly zero and yet, the secrecy rate is also around zero (with the chosen parameters).

In comparison between Figure \ref{result 6} and Figure \ref{result 8}, one can find the two differences: i) the presence of $\theta_{\eav}$ and the absence of $\phi$ in Figure \ref{result 6}; and ii) the absence of $\theta_{\eav}$ and the absence of $\phi$ in Figure \ref{result 8}. It is because of the fact that $\left( \textbf{R1} \right)$ and $\left( \textbf{S1} \right)$ have different security constraints. With the setup parameters, $\left( \textbf{S1} \right)$ offers better performance than $\left( \textbf{R1} \right)$ because the required power consumption is lower (i.e., the curves in Figure \ref{result 8} is slightly lower than those in Figure \ref{result 6}).

%------------------------------------------------
\section{Conclusions}\label{SEC: CON}
%------------------------------------------------
In this paper, we have considered a cell-free MIMO network in the presence of an active eavesdropper. We have suggested maximization problems to maximize the achievable secrecy rate subject to quality-of-service constraints. Also, minimization problems have been provided to minimize power consumption as long as security requirements are still guaranteed. In finding the optimal values of the power control coefficients $\{\eta_{mk}\}_{m,k}$, we have considered two different cases: i) $\eta_{mk}$ changes with $m$ and $k$; and ii) $\eta_{mk} = \eta$ for all $m$ and $k$. Through numerical results, we have found that the case of $\eta_{mk} = \eta$ will lead to far worse performance than the other case. Based on numerical results and intuitive observations, a trade-off problem between secrecy rate and energy consumption may be considered for cell-free networks in the future. Besides, preventing Eve's intrusion into the pilot training will be also worth considering.

%\def\svgwidth{\columnwidth}
%\input{drawing.pdf_tex}
%
%\def\svgwidth{\columnwidth}
%\input{anhmau.eps_tex}

%------------------------------------------------
\appendix
%------------------------------------------------
\subsection{A Simple Method to Identify Abnormality in Pilot Training} \label{Appendix: method}
As presented in Subsection II.A, the $m$th AP receives the array of signals $\{ y_{km} \}_{k=1}^K$ after calculating the Hermitian inner product between $\B{y}_{p,m}$ and $\B{p}_k$. Then all APs (through the CPU) exchange information and make a calculation of
\begin{align}
\mathcal{Y} &\triangleq \sum_{m=1}^{M} \EX{ | y_{1m} |^2 }
\nonumber
\end{align}
to check if Eve tries to overhear the signal transmitted from APs to user 1. If $\mathcal{H}_0$ denotes the hypothesis that there is no active eavesdropping and $\mathcal{H}_1$ denotes the opposite, then two possibly obtained values of $\mathcal{Y}$ are
\begin{align}
	\mathcal{Y}|_{\mathcal{H}_0} &= T \rho_u \sum_{m=1}^{M} \beta_{m1} + M,
	\nonumber \\
	\mathcal{Y}|_{\mathcal{H}_1} &= T \rho_u \sum_{m=1}^{M}  \beta_{m1} + T \rho_{\eav} \sum_{m=1}^{M} \beta_{m\eav} + M.
	\nonumber
\end{align}
It is clear that $\mathcal{Y} |_{\mathcal{H}_1} > \mathcal{Y} |_{\mathcal{H}_0}$ always holds for $\rho_{\eav}>0$. Therefore, APs simply compare $\mathcal{Y}$ with $\mathcal{Y}|_{\mathcal{H}_0}$ to make the decision, i.e.
\begin{itemize}
	\item $\mathcal{Y} = \mathcal{Y} |_{\mathcal{H}_0} \Leftrightarrow$ No active eavesdropping.
	\item $\mathcal{Y} > \mathcal{Y} |_{\mathcal{H}_0} \Leftrightarrow$ Eve is seeking to attack the system.
\end{itemize}
Note that $\mathcal{Y} |_{\mathcal{H}_0}$ is a \emph{known} value and is referred to as the only threshold (which APs need) to check any abnormality in pilot training related to the pilot $\B{p}_1$.

In fact, the above-mentioned detection method can be performed without knowing the value of $\rho_{\eav}$. However, $\rho_{\eav}$ can also be predicted by
$$
\rho_{\eav} = \frac{ \mathcal{Y} - \mathcal{Y} |_{\mathcal{H}_0} }{ T \sum_{m=1}^{M} \beta_{m\eav} }
$$
in the case that active eavesdropping occurs.

%------------------------------------------------
\subsection{Explicit expression for $\snrE{}$} \label{Appendix: theo: 1}
We first calculate
\begin{align}\label{BU_E}
  &\EX{|\mathrm{BU}_{\eav, 1}|^2}
    = \rho_s \sum_{m=1}^{M} \eta_{m1} \EX{ \left| g_{m\eav} \hat{g}_{m1}^*  \right|^2 }
    \nonumber \\
    &\mathop=\limits^{(a)}
    \rho_s \sum_{m=1}^{M} \eta_{m1}
        \EX{ \left| \left(e_{m\eav} + \hat{g}_{m\eav}\right) \hat{g}_{m1}^* \right|^2 }
    \nonumber \\
    &=
    \rho_s \sum_{m=1}^{M} \eta_{m1}
    \left( \EX{ \left| \hat{g}_{m\eav} \hat{g}_{m1}^* \right|^2 } + \EX{ \left| e_{m\eav} \right|^2 \left| \hat{g}_{m1}^* \right|^2 }
    \right)
    \nonumber \\
    &\mathop=\limits^{(b)}
    	\rho_s \sum_{m=1}^{M} \eta_{m1}
        \left[
            2 \alpha_m \gamma_{m1}^2
            + \left(\beta_{m\eav}-\gamma_{m\eav}\right) \gamma_{m1}
        \right]
    \nonumber \\
    &\mathop=\limits^{(c)}
    \rho_s \sum_{m=1}^{M} \eta_{m1}
	\left( \alpha_m \gamma_{m1}^2 + \beta_{m\eav} \gamma_{m1}\right)
	\nonumber \\
	&=
    \rho_s \sum_{m=1}^{M} \eta_{m1}
	\left[
	\left(
		\frac{ \rho_{\eav}\beta_{m\eav}^2 }{\rho_u\beta_{m1}^2}
	\right)
	\gamma_{m1}^2
	+ \beta_{m\eav} \gamma_{m1}
	\right]
\end{align}
where $(a)$ is obtained by substituting $g_{m\eav} = e_{m\eav} + \hat{g}_{m\eav}$ with $e_{m\eav} \triangleq g_{m\eav} - \hat{g}_{m\eav}$ being the channel estimation error for the link between the $m$th AP and Eve. In deriving $(a)$, we also use the fact that $e_{m\eav} \sim \CN{0}{\beta_{m\eav}-\gamma_{m\eav}}$ is independent of $\hat{g}_{m\eav}$. The equality $(b)$ is obtained by using \eqref{eq: pro 2}. Meanwhile, $(c)$ results from the substitution of $\gamma_{m\eav} = \alpha_m \gamma_{m1}$.

Similarly, for $k'\neq 1$, we calculate
\begin{align}\label{UI_E}
  &\EX{ \left| \mathrm{UI}_{\eav, k'} \right|^2}
    = \rho_s \EX{ \left| \sum_{m=1}^{M}\sqrt{\eta_{mk'}}g_{m\eav}\hat{g}_{mk'}^* \right|^2}
%    \nonumber \\
%    &= \rho_s \sum_{m=1}^{M} \eta_{mk'} \EX{\left| \left(\hat{g}_{m\eav}+e_{m\eav}\right)\hat{g}_{mk'}^* \right|^2}
    \nonumber \\
    &= \rho_s \sum_{m=1}^{M} \eta_{mk'}
    \left( \EX{\left| \hat{g}_{m\eav}\hat{g}_{mk'}^* \right|^2}
            + \EX{\left| e_{m\eav}\hat{g}_{mk'}^* \right|^2} \right)
    \nonumber \\
    &\mathop=\limits^{(a)} \rho_s \sum_{m=1}^{M} \eta_{mk'}
    \left( \alpha_m
            \EX{\left| \hat{g}_{m1} \hat{g}_{mk'}^* \right|^2}
            + \EX{\left| e_{m\eav} \right|^2 \left| \hat{g}_{mk'}^* \right|^2} \right)
    \nonumber \\
    &\mathop=\limits^{(b)} \rho_s \sum_{m=1}^{M} \eta_{mk'}
    \left[ \alpha_m
        \gamma_{m1} \gamma_{mk'} +  \left(\beta_{m\eav}-\alpha_m \gamma_{m1}\right) \gamma_{mk'} \right]
    \nonumber \\
    &= \rho_s \sum_{m=1}^{M} \eta_{mk'} \beta_{m\eav} \gamma_{mk'}
\end{align}
where $(a)$ is obtained by using \eqref{phase 1: MMSE estimate 2} and $(b)$ results from the substitution of \eqref{eq: pro 1}.

Finally, substituting \eqref{BU_E} and \eqref{UI_E} into \eqref{phase 2: SNRe_1} yields \eqref{phase 2: SNRe_2}.

%NOTE:
%\begin{align}
%    \MeijerG{2}{0}{1}{2}{z}{1}{0,0} &= E_1(z)
%    \\
%    \MeijerG{2}{0}{1}{2}{z}{1}{\mu,0} &= \Mu(\mu,z)
%\end{align}
%
%\begin{proposition}\label{app: proposition 2}
%\begin{align}
%    & \int_l^u \ln(1+\eta) e^{-\eta/b} d\eta
%        \nonumber \\ &=
%          b\left[ e^{-l/b} \ln(1+l) - e^{-u/b} \ln(1+u) \right]
%        \nonumber \\ &
%          + be^{1/b} \left[ E_1\left(\frac{1+l}{b}\right) - E_1\left(\frac{1+u}{b}\right) \right]
%\end{align}
%\end{proposition}

%-------------------------------------------------------------%
\bibliographystyle{IEEEtran}

\begin{thebibliography}{10}
\providecommand{\url}[1]{#1}
\csname url@samestyle\endcsname
\providecommand{\newblock}{\relax}
\providecommand{\bibinfo}[2]{#2}
\providecommand{\BIBentrySTDinterwordspacing}{\spaceskip=0pt\relax}
\providecommand{\BIBentryALTinterwordstretchfactor}{4}
\providecommand{\BIBentryALTinterwordspacing}{\spaceskip=\fontdimen2\font plus
\BIBentryALTinterwordstretchfactor\fontdimen3\font minus
  \fontdimen4\font\relax}
\providecommand{\BIBforeignlanguage}[2]{{%
\expandafter\ifx\csname l@#1\endcsname\relax
\typeout{** WARNING: IEEEtran.bst: No hyphenation pattern has been}%
\typeout{** loaded for the language `#1'. Using the pattern for}%
\typeout{** the default language instead.}%
\else
\language=\csname l@#1\endcsname
\fi
#2}}
\providecommand{\BIBdecl}{\relax}
\BIBdecl

\bibitem{Ngo-2015-CellFree-SPAWC}
H.~Q. Ngo, A.~Ashikhmin, H.~Yang, E.~G. Larsson, and T.~L. Marzetta,
  ``Cell-free massive {MIMO}: {U}niformly great service for everyone,'' in
  \emph{the IEEE 16th Int. Workshop Sig. Process. Adv. Wirel. Commun. (SPAWC)},
  Stockholm, Sweden, Jul. 2015, pp. 201--205.

\bibitem{Cell-free_AHien2017}
------, ``Cell-free massive {MIMO} versus small cells,'' \emph{IEEE Trans.
  Wirel. Commun.}, vol.~16, no.~3, pp. 1834--1850, Mar. 2017.

\bibitem{AHien-2017-TGCN}
H.~Q. Ngo, L.-N. Tran, T.~Q. Duong, M.~Matthaiou, and E.~G. Larsson, ``On the
  total energy efficiency of cell-free massive {MIMO},'' \emph{IEEE Trans.
  Green Commun. Net.}, 2017, to appear.

\bibitem{Buzzi-2017-CellFree-WCL}
S.~Buzzi and C.~D’Andrea, ``Cell-free massive {MIMO}: {U}ser-centric
  approach,'' \emph{IEEE Wirel. Commun. Lett.}, 2017, to appear.

\bibitem{Liu-2017-CellFree-TSP}
A.~Liu and V.~K.~N. Lau, ``Joint {BS}-user association, power allocation, and
  user-side interference cancellation in cell-free heterogeneous networks,''
  \emph{IEEE Trans. Sig. Process.}, vol.~65, no.~2, pp. 335--345, Jan. 2017.

\bibitem{Long-2017-CellFree-COML}
L.~D. Nguyen, T.~Q. Duong, H.~Q. Ngo, and K.~Tourki, ``Energy efficiency in
  cell-free massive {MIMO} with zero-forcing precoding design,'' \emph{IEEE
  Commun. Lett.}, vol.~21, no.~8, pp. 1871--1874, Aug. 2017.

\bibitem{Toan-2017-CellFree-COML}
T.~X. Doan, T.~Q. Duong, H.~Q. Ngo, and K.~Tourki, ``On the performance of
  multigroup multicast cell-free massive {MIMO},'' \emph{IEEE Commun. Lett.},
  vol.~21, no.~12, pp. 2642--2645, Dec. 2017.

\bibitem{Zhou-2012-Security-TWC}
X.~Zhou, B.~Maham, and A.~Hjorungnes, ``Pilot contamination for active
  eavesdropping,'' \emph{IEEE Trans. Wirel. Commun.}, vol.~11, no.~3, pp.
  903--907, Mar. 2012.

\bibitem{Abedi-2017-TWC}
M.~R. Abedi, N.~Mokari, H.~Saeedi, and H.~Yanikomeroglu, ``Robust resource
  allocation to enhance physical layer security in systems with full-duplex
  receivers: {A}ctive adversary,'' \emph{IEEE Trans. Wirel. Commun.}, vol.~16,
  no.~2, pp. 885--899, Feb. 2017.

\bibitem{Li-2017-TIFS}
L.~Li, A.~P. Petropulu, and Z.~Chen, ``{MIMO} secret communications against an
  active eavesdropper,'' \emph{IEEE Trans. Info. Fore. Secu.}, vol.~12, no.~10,
  pp. 2387--2401, Oct. 2017.

\bibitem{Mukherjee-2013-Security-TSP}
A.~Mukherjee and A.~L. Swindlehurst, ``Jamming games in the {MIMO} wiretap
  channel with an active eavesdropper,'' \emph{IEEE Trans. Signal Process.},
  vol.~61, no.~1, pp. 82--91, Jan. 2013.

\bibitem{Amariucai-2012-TIFS}
G.~T. Amariucai and S.~Wei, ``Half-duplex active eavesdropping in fast-fading
  channels: {A} block-{M}arkov {W}yner secrecy encoding scheme,'' \emph{IEEE
  Trans. Info. Fore. Secu.}, vol.~58, no.~7, pp. 4660--4677, Jul. 2012.

\bibitem{Kapetanovic-2015-Security-Mag}
D.~Kapetanovic, G.~Zheng, and F.~Rusek, ``Physical layer security for massive
  {MIMO}: {A}n overview on passive eavesdropping and active attacks,''
  \emph{IEEE Commun. Mag.}, vol.~53, no.~6, pp. 21--27, Jun. 2015.

\bibitem{Wu-2016-Security-TIT}
Y.~Wu, R.~Schober, D.~W.~K. Ng, C.~Xiao, and G.~Caire, ``Secure massive {MIMO}
  transmission with an active eavesdropper,'' \emph{IEEE Trans. Info. Theo.},
  vol.~62, no.~7, pp. 3880--3900, Jul. 2016.

\bibitem{Im-2015-Security-TWC}
J.~C. S.~Im, H.~Jeon and J.~Ha, ``Secret key agreement with large antenna
  arrays under the pilot contamination attack,'' \emph{IEEE Trans. Wirel.
  Commun.}, vol.~14, no.~12, pp. 6579--6594, Dec. 2015.

\bibitem{Tugnait-2016-WCL}
J.~K. Tugnait, ``Detection of active eavesdropping attack by spoofing relay in
  multiple antenna systems,'' \emph{IEEE Wirel. Commun. Lett.}, vol.~5, no.~5,
  pp. 460--463, Oct. 2016.

\bibitem{Xiong-2016-TIFS}
Q.~Xiong, Y.-C. Liang, K.~H. Li, Y.~Gong, and S.~Han, ``Secure transmission
  against pilot spoofing attack: {A} two-way training-based scheme,''
  \emph{IEEE Trans. Info. Fore. Secu.}, vol.~11, no.~5, pp. 1017--1026, May
  2016.

\bibitem{Xiong-2015-TIFS}
Q.~Xiong, Y.-C. Liang, K.~H. Li, and Y.~Gong, ``An energy-ratio-based approach
  for detecting pilot spoofing attack in multiple-antenna systems,'' \emph{IEEE
  Trans. Info. Fore. Secu.}, vol.~10, no.~5, pp. 932--940, May 2015.

\bibitem{Rappaport-Book-2002}
T.~S. Rappaport, \emph{Wireless Communications: {P}rinciples and Practice},
  2nd~ed.\hskip 1em plus 0.5em minus 0.4em\relax USA: Prentice Hall PTR, 2002.

\bibitem{Chen-2006-VTC-HataModel}
Y.~H. Chen and K.~L. Hsieh, ``A dual least-square approach of tuning optimal
  propagation model for existing {3G} radio network,'' in \emph{the IEEE 63rd
  Veh. Tech. Conf.}, Melbourne, Australia, 2006, pp. 2942--2946.

\bibitem{Fundamentals_MassiveMIMO}
T.~L. Marzetta, E.~G. Larsson, H.~Yang, and H.~Q. Ngo, \emph{Fundamentals of
  Massive {MIMO}}.\hskip 1em plus 0.5em minus 0.4em\relax UK: Cambridge
  University Press, 2016.

\bibitem{Secure-Massive-PILOT_Kapetanovic2013}
D.~Kapetanovic, G.~Zheng, K.-K. Wong, and B.~Otterste, ``Detection of pilot
  contamination attack using random training and massive {MIMO},'' in
  \emph{2013 IEEE 24th Int. Symp. Pers., Indoor, Mobile Radio Commun. (PIMRC)},
  London, U.K., 2013, pp. 13--18.

\bibitem{Secure-Massive-Pilot_Wu2015}
Y.~Wu, R.~Schober, D.~W.~K. Ng, C.~Xiao, and G.~Caire, ``Secure massive {MIMO}
  transmission in presence of an active eavesdropper,'' in \emph{2015 IEEE Int.
  Conf. Commun. (ICC)}, London, U.K., Jun. 2015, pp. 1434--1440.

\bibitem{Mutual_information-Shamai2002}
A.~Lapidoth and S.~Shamai, ``Fading channels: {H}ow perfect need “perfect
  side information” be?'' \emph{IEEE Trans. Info. Theo.}, vol.~48, no.~5, pp.
  1118--1134, May 2002.

\bibitem{Mutual_Information-Goldsmith2006}
T.~Yoo and A.~Goldsmith, ``Capacity and power allocation for fading {MIMO}
  channels with channel estimation error,'' \emph{IEEE Trans. Info. Theo.},
  vol.~52, no.~5, pp. 2203--2214, May 2006.

\bibitem{Zhang-2014-Rician-MassiveMIMO}
Q.~Zhang, S.~Jin, K.-K. Wong, H.~Zhu, and M.~Matthaiou, ``Power scaling of
  uplink massive {MIMO} systems with arbitrary-rank channel means,'' \emph{IEEE
  J. Sel. Top. Sig. Process.}, vol.~8, no.~5, pp. 966--981, Oct. 2014.

\bibitem{NTDP17a}
A.~A. Nasir, H.~D. Tuan, T.~Q. Duong, and H.~V. Poor, ``Secrecy rate
  beamforming for multicell networks with information and energy harvesting,''
  \emph{IEEE Trans. Signal Process.}, vol.~65, no.~3, pp. 677--689, Feb. 2017.

\bibitem{NTDP17b}
N.~T. Nghia, H.~D. Tuan, T.~Q. Duong, and H.~V. Poor, ``{MIMO} beamforming for
  secure and energy-efficient wireless communication,'' \emph{IEEE Signal
  Process. Lett.}, vol.~24, no.~2, pp. 236--239, Feb. 2107.

\bibitem{H.Tuy-Book-2016-Opt}
H.~Tuy, \emph{Convex Analysis and Global Optimization (second edition)}.\hskip
  1em plus 0.5em minus 0.4em\relax Springer, 2016.

\end{thebibliography}

%-------------------------------------------------------------%

\balance
\end{document}